\newcommand{\R}{\mathbb{R}}
\newcommand{\bcb}{\begin{color}{blue}}
\newcommand{\bcr}{\begin{color}{red}}
\newcommand{\bcg}{\begin{color}{cyan}}
\newcommand{\ec}{\end{color}}
\title{Multi-portfolio internal rebalancing processes\\\large Linking resource allocation models and biproportional matrix techniques to portfolio management}
\author{Kelli Francis-Staite\footnote{The University of Adelaide and Statewide Super}}
\date{\today}
\theoremstyle{plain}
\newtheorem{theorem}{Theorem}[section]
\newtheorem{lemma}[theorem]{Lemma}
\newtheorem{proposition}[theorem]{Proposition}
\newtheorem{corollary}[theorem]{Corollary}
\theoremstyle{definition}
\newtheorem{definition}[theorem]{Definition}
\newtheorem{example}[theorem]{Example}
\newtheorem{remark}[theorem]{Remark}
\numberwithin{figure}{section}
\begin{document}

\maketitle
\begin{abstract}
    This paper describes multi-portfolio \emph{internal} rebalancing processes used in the finance industry. Instead of trading with the market to \emph{externally} rebalance, these internal processes detail how portfolio managers buy and sell between their portfolios to rebalance. We give an overview of currently used internal rebalancing processes, including one known as the \emph{banker} process and another known as the \emph{linear} process. We prove the banker process disadvantages the nominated banker portfolio in volatile markets, while the linear process may advantage or disadvantage portfolios.
    
    We describe an alternative process that uses the concept of \emph{market-invariance}. We give analytic solutions for small cases, while in general show that the $n$-portfolio solution and its corresponding `market-invariant' algorithm solve a system of nonlinear polynomial equations. It turns out this algorithm is a rediscovery of the RAS algorithm (also called the \emph{iterative proportional fitting procedure}) for biproportional matrices. We show that this process is more equitable than the banker and linear processes, and demonstrate this with empirical results. 
    
    The market-invariant process has already been implemented by industry due to the significance of these results. 
\end{abstract}
\tableofcontents
\section{Introduction} \label{sec:introduction}

This paper describes and extends multi-portfolio rebalancing processes currently used in the financial industry. Rebalancing processes are the strategy or rules that a portfolio manager undertakes to maintain their portfolios of assets around certain target allocations, as in \citet[pg.~167]{Bernstein2010}. In this paper, we will classify rebalancing processes into two types: the \emph{external} ones, where rebalancing of the portfolios occurs by trading with the market, and the \emph{internal} ones, where rebalancing of the portfolios occurs through reallocation of current assets between the different portfolios. The focus of this paper will be on internal processes, which are only applicable for multi-portfolio managers, although we survey both in \S\ref{sec: rebalintro}.

Internal rebalancing processes are a type of resource allocation problem. In general, resource allocation problems have been well studied in the literature, and are considered a core problem of economics as in \citet{Conrad1999}. Some questions that economists study include how resources are currently allocated, forecasting how they may be allocated in the future, and determining how best to allocate. Internal rebalancing processes consider the third point of how a portfolio manager should best allocate assets, although much of the connected literature considers the first two points, focusing on estimation and forecasting. 

Optimisation and game theory techniques may be used to consider how best to allocate resources as in \citet[\S5]{Nisan2007}. In the game theory setting, utility functions determine payoffs for individuals for allocating resources and solutions are usually in the form of equilibria. Such equilibra may not guarantee the optimal solution for each individual as in the Prisoner's Dilemma, see for example \citet[pg.~4]{LeytonBrown2008}. In contrast, the optimisation perspective defines a collective objective function that is required to be optimised subject to a set of constraints, as in \cite[\S II]{Gardner1990}. 

While many problems can be considered from both perspectives, our analysis of internal rebalancing processes in \S\ref{sec:rebalmulti} will start by examining the existing processes and their properties, such as the \emph{market-invariant} property. We will show that each rebalancing process can be formulated in terms of objective and utility functions in \S\ref{sec: MIoptimisation}, although this is not required to formulate these processes.

More specifically, internal rebalancing processes seek an allocation of assets to all portfolios such that all assets are allocated, all portfolios have the required total assets, and this allocation is as `close to' the target asset allocation (usually written as a matrix $M$) as possible. The asset allocation is usually written as a matrix $A$ and the target allocation as a matrix $M$. It is called a \emph{process} as at each point in time there may be a different amount of assets, a different total required for each portfolio and potentially a different target $M$, and one requires a systematic way of allocating the assets to the portfolios. This process then determines the definition of `close to'. 

It turns out there are many solutions to this problem, infinitely many as we discuss in \S\ref{sec:rebalmulti} depending on the notion of `close to'. We will show that the property of market-invariance will determine such a definition of `close to', giving us a well-defined process and an algorithm to calculate the allocation matrix $A$. This market-invariant property is particularly important for our analysis, as the internal rebalancing process with this property does not advantage or disadvantage portfolios due to market movements. These notions of `close to' usually determine an objective function and allow us to write these processes from an optimisation perspective, which we do in \S\ref{sec: MIoptimisation}.

While internal rebalancing problems are well known in practical settings of portfolio management, the theory behind them is quite general. For instance, finding such a matrix $A$ is a requirement of estimating input-output matrices as in \citet{Bacharach1971}, and a similar matrix must be found to describe powerflows along electricity lines as we discuss in \Cref{subsec:supplydemandgeneral} and \Cref{app: electricitynetworks}. These different applications usually determine what `close to' should mean.

After undertaking this research independently, we found that the market-invariance property is related to the problem in the economics literature of finding a \emph{biproportional} matrix as in \citet{Bacharach1965}, and the market-invariant algorithm is a rediscovery of the \emph{RAS algorithm} from \citet{Stone1942} (see also \citet{lahr2004} for a survey of this work). This algorithm is also known by various other names including the \emph{biproportional fitting algorithm}  and the \emph{iterative proportional fitting procedure} (see \citet{Lomax2015} or \citet{Lovelace2015} for example). We are not unique in this rediscovery, with early independent works by \citet{Kruithof1937} for telephone traffic, \citet{Deming1940} for census data, and Sheliekhovskii and Bregman as in \cite{Bregman1967} for convex optimisation. 

In particular, in \citet{Bacharach1971} the target allocation matrix $M$ is known as an \emph{input-output} matrix for a closed Leontief System (see \citet{Leontief1986}, \citet{Ryan1953}, and \citet{Berman1979}), and it was used to describe the structure of economic systems. In this case, instead of portfolios composed of assets, the system describes production of goods from different proportions of commodities. 

Taking this goods and commodities example from \cite{Bacharach1971}, the idea of finding a biproportional matrix is as follows. Over a given period of time there should be a matrix $A^*$ that describes how the commodities are being used by each industry to produce each good. However, there may be several unknowns --- the biproportional case considers that the total number of commodities being used and the total output of the commodities is known, along with the matrix $M$ which is some known measure of how much of each commodity is required for each good (which may be estimated for example from a previous period of time). The matrix $A^*$ is the true proportion used by each industry to produce each good, and this is unknown, and the problem becomes that of how to estimate $A^*$. The solution in \cite{Bacharach1971} proposes that the matrix $A^*$ can be estimated by a matrix of the form of a product $B_1MB_2$ where $B_1$ and $B_2$ are positive diagonal matrices. This matrix is then called \emph{biproportional} to $M$. 

We note that the literature seems to use these biproportional approaches for estimating or forecasting purposes. There is subsequent discussion how well these estimates fit, including how to generate confidence intervals, as in \citet[\S~1]{Bacharach1971}. Our problem is not to estimate or forecast a matrix, instead we are using these techniques to decide how best to allocate assets, so the discussion of confidence intervals and other measures of estimation do not apply here. However, these biproportional techniques and their extensions can be used to determine internal rebalancing processes. 

More recently, the literature has broadened these biproportional studies to consider when different information is known or unknown. For instance if one or more entries of the matrix $A$ are known or include negative elements, as in the GRAS algorithm from \citet{Junius2003} and its extensions detailed in \citet{Huang2008}, or one or more entries of the column or row totals $p_j$, $a_i$ are unknown, as in \citet{Temursho2021}. See also \citet[\S~7.4.1]{Miller2009} or \cite[Ch.~18]{Handbook2018} for a detailed summary. 

There are also multi-dimensional examples (that is, $m_1\times m_2\times \cdots \times m_k$ dimensional matrices and tensors) as studied in \citet{Sugiyama2017} and \citet{VJ2021}. We do not study these here, although these extensions can be meaningful for internal rebalancing processes. For example, the multi-dimensional case could be used to consider other types of asset exposures, including exposures across different currencies, countries or industries. We also discuss interpreting negative entries for internal rebalancing processes in \S\ref{subsubsec:negativity}.

In general, the literature on biproportional problems is extensive. There are theoretical studies as in \citet{Bregman1967}, \citet{Mesnard1994} and \citet{Bacharach1971}; there are motivations from statistics as in \citet{Kullback1968} for contingency tables and from economics as in \citet{Bacharach1965} for input-output models; and there are a wide range of applications. The research in this paper was undertaken without knowledge of this literature nor the techniques within, and has now been updated to include references to this research. 

This paper adds much to the literature. From the perspective of an interested portfolio manager, the work is entirely self-contained, surveying the existing portfolio management rebalancing practices, and discussing alternative techniques and results. 

In \S\ref{sec: rebalintro} we survey the existing literature on rebalancing process, particularly external rebalancing process, and discuss their importance for portfolio managers. While we would have preferred to focus on internal rebalancing process in this review, the author has found no existing literature that covers this other than the theoretical connection to the biproportional literature described above. In \S\ref{sec:rebalmulti} we seek to rectify this, where we define internal rebalancing problems in general and describe the processes used in practice. These include the banker process in \S\ref{subsec:banker}, the linear process in \S\ref{subsec:linear}, and hybrid processes in \S\ref{subsec:hybrid}. We also discuss the relationships to optimisation problems in \S\ref{subsec:optimisation} and how supply and demand problems, including electricity power flow constraints, can be considered as rebalancing processes in \Cref{subsec:supplydemandgeneral} and \Cref{app: electricitynetworks}. 

In \S\ref{sec: MarketinvariantRebal} we discuss the market-invariant rebalancing process. We initially give the definition in terms of the market-invariant property, and then we show how this results in a well defined process. We then determine analytic solutions in small cases including where the dimension of the matrix $M$ is $(m,n)=(2,2),(2,3),(3,2),(4,2),(2,4)$, then discuss what happens in the case $(m,n)=(3,3)$ and higher order cases in \S\ref{subsec:mn2} and \S\ref{subsec:mn234}. While the market-invariant process is related to the RAS algorithm, the presentation of defining it in terms of a property is novel and the analytic solutions have not appeared in the literature so far.

In general, we show that the market-invariant algorithm in \S\ref{subsec:generalcase} gives the process for $M$ of arbitrary dimension. This algorithm is a rediscovery of the RAS algorithm. We present our own proof of convergence. We then compare the market-invariant process to the banker and linear processes in \S\ref{subsec:comparisons}. We show that under certain market conditions the banker process disadvantages the banker portfolio and the linear process may advantage or disadvantage each portfolio, while the market-invariant process does not disadvantage any portfolio over another. This is in contrast to the general understanding in practice that these different rebalancing strategies have little effect on portfolio returns over time. This motivates using the market-invariant process for internal portfolio rebalancing to minimise any inequities from market movements. 

This research is a significant contribution to the financial industry, and has already been implemented in the Superannuation industry in Australia. Funds have responded to this research by discontinuing the use of banker and linear rebalancing process or hybrids, and instead using the market-invariant process to minimise inequity between portfolios due to market movements.

\section{Rebalancing processes in the finance industry}\label{sec: rebalintro}

A core problem for portfolio managers is to maintain exposures as markets fluctuate and as they contribute to or redeem from their investments. Some portfolio managers employ a `buy and hold' strategy, where after the initial purchase of assets they continue to hold the assets and never correct for market movements or cash flows (also called a \emph{drift-weight} portfolio as in \Citet{Granger2014}). However, portfolio managers usually have target allocations for their assets (or collections of assets called \emph{asset classes}) that contribute to maximising returns while achieving a certain level of risk, as first motivated by \Citet{Markowitz1952}.  Portfolio managers then undertake a process of \emph{rebalancing}, where they buy or sell assets to maintain their desired market exposures. 

An example to keep in mind would be a simple portfolio consisting of two different asset classes, say shares and bonds. There may be a target asset allocation of, say, $60\%$ shares (which we call a $60\%$ \emph{weight} to shares) while the rest is bonds. We would say that this portfolio aims to be  $60\%$ (directly) exposed to the share market and $40\%$ (directly) exposed to the bond market. We call the collection of shares the shares asset class, and similarly for bonds. If the shares' market value increases, the weight of the share asset class increases and we say that the portfolio is \emph{overweight} shares and \emph{underweight} bonds. At this point the portfolio manager holding this portfolio must make a decision whether or not to rebalance their portfolio back to the target asset allocations. To do so, they would need to sell shares and buy bonds by trading with the market.

A portfolio manager with a single portfolio may manage these rebalancing decisions manually or may use pre-determined rules to trade with the market to maintain their exposures close to their target allocations. This may improve risk characteristics as in \Citet{Ilmanen2015}. We call such rules \emph{(external) rebalancing processes}. 

Such rebalancing processes have been well studied throughout the literature. They are often extensions of `buy low, sell high' rules that give conditions on the extent of the buy and sell, and what is considered low or high. They also suggest how frequently to apply such rules particularly given the costs involved in their implementation. For example, \Citet{Zilbering2015} suggest rebalancing back to the target allocations whenever there are $5\%$ deviations or larger on an annual or semi-annual basis, and \citet[pg.~167]{Bernstein2010} suggests rebalancing even less frequently for tax reasons. While rebalancing is considered an important part of a portfolio manager's strategy as in \citet{Weinstein2003}, there does not appear to be any consensus on which processes to apply, with different strategies perform optimally in different conditions as in \citet{Tsai2001}.

More advanced rebalancing processes tend to use optimisation techniques (such as minimising cost and/or minimising (conditional) value-at-risk as in \citet{Maclean2006} and \citet{Meghwani2018}); consider derivatives as in \citet{Israelov2018}; use stochastic probability theory, dynamic programming and heuristic approximations such as machine learning algorithms as in \citet{Kritzman2008}, \citet{Perrin2020} and \citet{Sun2006}; or use fuzzy logic models as in \citet{Fang2006}. In general, these processes rely on the ability for a portfolio manager to be able to buy and sell with the market --- in particular, they rely on a reasonable level of \emph{liquidity} of their assets. Much of the statistical data required to implement these processes also relies on the ability of such data to be collected (or predicted) and may not have the same meaning for illiquid assets --- for example, how does one understand daily or weekly volatility of an asset class whose assets only price monthly or quarterly? See \citet[Ch.~13.4]{Ang2014} for details on issues with illiquid asset data. 

When a portfolio manager has more than one portfolio under management with potentially different target allocations, the literature has concentrated on managers pooling together the assets and applying these processes to the collective totals. More recently, the literature has begun to consider the interrelationships between the portfolios and how to distribute the assets and costs between the rebalanced portfolios `fairly', as in \Citet{Stubbs2007}, \Citet{Iancu2014} and \Citet{Zhang2019}.

The focus in this paper is not on external rebalancing processes, but instead on what we call \emph{internal} rebalancing processes, which exist for multi-portfolio managers. These rebalancing processes do not consider buying and selling with the market and are purely concerned with the distribution of assets (and costs) fairly to all portfolios. They do not currently rely upon historical or predicted statistics, such as expected returns or variances, and they resemble the resource allocation problems detailed in \S\ref{sec:introduction}. Such processes are used for funds with multiple portfolios that use portions of one or more of the same pooled assets in each of the portfolios' asset allocations. 

Extending our simple example above to consider this, say an investment manager is managing two portfolios, both worth $\$100$, one with a target exposure of $55\%$ shares and $40\%$ bonds and $5\%$ cash, and the other with a target exposure of $35\%$ shares and $55\%$ bonds and $10\%$ cash. To manage this, the portfolio manager thinks in aggregate: they need a total of $\$55+\$35=\$90$ in a shares asset class and $\$40+\$55=\$95$ in a bonds asset class. The investment manager has selected a collection of shares and bonds that they think will do well, so they buy a total of $\$90$ worth in shares and $\$95$ in bonds to form the asset classes. They then distribute $\$55$ worth of the shares asset class to the first portfolio and the rest to the second portfolio, and distribute $\$40$ of the bonds asset class to the first portfolio and the rest to the second portfolio, leaving the remaining $\$15$ as cash. 

Now say the second portfolio has a large withdrawal so that it is left with only $2\%$ cash, however at the same time the shares portfolio has a large distribution of cash from its investments. This would result in the first portfolio becoming overweight cash (and underweight shares and bonds) while the second portfolio is underweight cash (and over weight shares and bonds). Without trading with the market, the portfolio manager notices that they can redistribute the asset classes between the portfolios to correct for some of these overweights and underweights. How the manager decides to do this would require an internal rebalancing process.

While the portfolios in this example involved \emph{liquid}\footnote{By liquidity we mean \emph{market liquidity} as opposed to \emph{funding liquidity}. See \citet[pg.~5]{Mehrling2021} and \citet{Brunnermeier2008} for further details.} asset classes, where the assets can be bought or sold in the time required at the volume required without moving the price, internal rebalancing process are particularly important for portfolios with asset classes that are illiquid. These asset classes cannot be bought or sold in the short time frames desired by the asset managers. They are also important as buying or selling with the market incurs fees or buy/sell spreads that managers are often incentivised to reduce. 

The motivating example for this process comes from Superannuation funds in Australia, which invest on behalf of working Australians and provide a pension income for these Australian workers in retirement. They hold various portfolios with different mixes of asset classes to provide for different investment desires of their members. These funds are incentivised to hold illiquid assets to meet return and volatility targets, however must also minimise fees such as transaction costs.  
These cost pressures incentivise funds to minimise their external rebalances and their illiquid holdings prevent frequent rebalancing of these assets classes. However they can and do internally rebalance as frequently as needed, as this is comparatively cost-less and there are no issues for illiquid assets.  Superannuation funds usually undertake these internal rebalancing processes daily, weekly or monthly in a systematised way. 

In the next section we discuss rebalancing processes common in Superannuation funds, for which there are no published references. The author knows of no published research on these internal processes nor their properties, and this paper will begin to fill this gap in the literature.

\section{Internal rebalancing processes}\label{sec:rebalmulti}
We start this section with necessary definitions before describing internal rebalancing processes known in industry. We will generally drop the word `internal' as all rebalancing processes described in the following will be internal unless labeled by `external'.

\subsection{Definitions and feasibility}
Here we define rebalancing problems and processes, as well as discuss the feasibility of a rebalancing problem and the degrees of freedom that constrain the possible processes. 

\begin{definition}\label{defn: rebalancingproblem}
A \emph{rebalancing problem} $(M,a,p)$ consists of the following: 
\begin{itemize}
    \item  A non-negative column vector $a=(a_1,\ldots,a_m)^T\in \R^m$ (asset class totals) and a non-negative column vector $p=(p_1,\ldots,p_n)^T\in \R^n$ (portfolio totals) where $\sum_i a_i=\sum_j p_j $.
    \item A real non-negative $m\times n$ matrix $M$ (target asset allocation) with entries $M_{ij}$ and with columns sums $\sum_{i}M_{ij}=1$. Here, $M_{ij}$ is the target proportion of portfolio $j$ required from asset class $i$.
    \item The problem of finding a real non-negative $m\times n$ matrix $A$ (actual asset allocation as a proportion) with entries $A_{ij}$ and with column sums $\sum_{i}A_{ij}=1$ for all $j$ such that $Ap=a$ and $A$ is `close to' $M$.
\end{itemize}
For a given problem $(M,a,p)$ there might not exist an $A$ without violating one of the constraints. We call such a problem \emph{infeasible}, and otherwise it is called \emph{feasible}. We may apply additional constraints such as requiring $A_{ij}=0$ whenever $M_{ij}=0$, and we discuss when this is feasible in \Cref{lem:feasibilityRebal}. A \emph{rebalancing process} is a process used to find $A$ for any given feasible rebalancing problem $(M,a,p)$ that defines a notion of `close to'. That is, a rebalancing process is a function $(M,a,p)\mapsto A$ for feasible $(M,a,p)$.

While $A$ is the asset allocation as percentages, we will also consider the matrix $A^{\$}$ with entries \[A^{\$}_{ij}=A_{ij}p_j.\] This is the actual asset allocation as values (often monetary values) and we have that the conditions \[\sum_{i}A^{\$}_{ij}=p_j \quad \text{ and } \quad \sum_{j}A^{\$}_{ij}=a_i\] are equivalent to requiring $Ap=a$ and $\sum_{i}A_{ij}=1$ for all $j$. 

Note that given $A^{\$}$ we can scale each column $j$ to sum to 1 by dividing by $p_j$ to construct $A$ provided $p_j$ is positive. If $p_j$ is zero, we can set $A_{ij}=M_{ij}$ for all $i$.
\end{definition}

An immediate question that comes to mind is how many different rebalancing process are there? Or how many different ways to choose $A$ are available? We discuss this in the following remark. Note also that both $m>n$ and $n>m$ are possible and appear in practice.

\begin{remark} In general, a matrix $A$ with $m\times n$ elements has $mn$ degrees of freedom in the choice of the elements. In our case, the requirement that $Ap=a$ gives $m$ constraints and that $\sum_{i}A_{ij}=1$ for all $j=1,2,\ldots,n$ gives $n$ constraints to the matrix $A$. However, knowing that $\sum_ia_i=\sum_jp_j$ means that one of these constraints is redundant. This means we expect \[mn-n-m+1=(m-1)(n-1)\] degrees of freedom for the possible values of $A$. That is, in general we can determine $(m-1)(n-1)$ of the elements of $A$ before the constraints imply what the rest of the elements must be. 

The requirement that $A$ is non-negative means that we consider a subset (with non-empty interior) of these possible values, so we still have $(m-1)(n-1)$ degrees of freedom but on a smaller space. In general, other than the trivial case $m=1$ or $n=1$, this means we have an infinite number of solutions to the rebalancing problem. The requirement of finding an $A$ `close to' $M$ will mean that we want a process that can determine $A$ uniquely for each given $(M,a,p)$ within this infinite solution space. 

Say we have such a process, so that given any $(M,a,p)$ we can determine a unique $A$. Then consider that we can construct infinitely many more rebalancing processes by the following: for this given process and a given $(M,a,p)$ determine $A$ and $A^{\$}$ by the original process. Then take a matrix $E$ consisting of the elements $1, -1, 0$ such that all rows and columns sum to $0$ (if $A$ has any zero elements, require that $E$ has a zero in the same positions). Then take the number $b_A = \min_{i,j}\{A^{\$}_{ij}|A^{\$}_{ij}>0\}$ and a scaling element $\alpha \in [0,1]$. Then define the new process by constructing the allocation $\hat{A}^{\$}$ to be 
\[ \hat{A}^{\$} = A^{\$} + \alpha b_A E.\] As the rows and columns of $E$ sum to zero, it is follows that $\hat{A}^{\$}$ satisfies the constraints and that this is a valid rebalancing process. That $b_A$ is a minimum and $\alpha \in [0,1]$ ensures all elements remain non-negative. 

The choice of $\alpha\in[0,1]$ shows that given any rebalancing process we can construct infinitely many more rebalancing process, with at least one degree of freedom. In fact, $E$ can be chosen so that it has zeros in the same position as $A$ but otherwise so that it contains values between $-1$ and $1$ so that its maximum element is $1$ or its minimum element is $-1$ and each column and row sum to $1$. This gives significantly more degrees of freedom in the cases where $m,n\ge 2$ and either $m$ or $n$ is greater than $2$. 

For a given process, we usually require that $A=M$ whenever $Mp=a$ to be part of the definition of `close to'. Given such a process exists, we can still construct infinitely many other processes by defining $\alpha$ in the above to be a function of $A$, where say \[\alpha = \beta\max_{i,j}\{|A_{ij}-M_{ij}|\}, \quad \beta \in [0,1].\]

We may wish to add additional requirements such as imposing that $A_{ij}=0$ whenever $M_{ij}=0$. This will reduce the degrees of freedom depending on the number of zeros in $M_{ij}$ and knowing if this property holds beforehand can reduce the complexity of finding the solutions.
\end{remark}

We now discuss conditions on $M,a,$ and $p$ that ensure there are always feasible solutions to a rebalancing problem so that it is possible to construct a rebalancing process. We then give examples of rebalancing processes, showing that they do exist.

\begin{lemma} \label{lem:feasibilityRebal}
Let $(M,a,p)$ be a rebalancing problem. If there are no additional constraints on $A$ then all problems are feasible.

If we impose that $A$ must have zeros where $M$ has zeros then there are feasible solutions to this problem provided the following conditions hold:
\begin{itemize}

\item[(1)] For any given distinct list $J\subseteq \{1,\ldots,n\}$ then for each $j\in J$ we define the collection 
\[I_j=\{i \in \{1,\ldots,m\} : M_{ij}>0\}\] and we require that \[\sum_{i\in\cup_{j\in J} I_j}a_{i}\ge \sum_{j\in J} p_{j}.\] This ensures there are enough assets to allocate to the portfolio requirements. 
\item[(2)] Similarly for a given distinct list $I\subseteq \{1,\ldots,m\}$ then for each $i\in I$ we define the collection 
\[J_i=\{j \in \{1,\ldots,n\} : M_{ij}>0\}\] and we require that \[\sum_{j\in \cup_{i\in I}J_i}p_{i}\ge \sum_{i\in I} a_{i}.\] This ensures there is enough portfolio capacity to allocate all assets. 
\end{itemize}Note that these conditions hold trivially in the case that $M$ has no zero elements.
\end{lemma}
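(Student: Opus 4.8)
The plan is to work throughout with the monetary allocation $A^{\$}$ rather than $A$, since by the equivalence recorded in \Cref{defn: rebalancingproblem} producing a non-negative $A^{\$}$ with row sums $\sum_j A^{\$}_{ij}=a_i$ and column sums $\sum_i A^{\$}_{ij}=p_j$ is the same as producing a valid $A$ (divide column $j$ by $p_j$ when $p_j>0$, and set the column equal to $M$'s column when $p_j=0$). In this form a rebalancing problem is exactly a \emph{transportation problem}: fill a non-negative matrix with prescribed row and column marginals, the balance $\sum_i a_i=\sum_j p_j$ being the consistency condition. For the unconstrained case this is immediate: I would exhibit the explicit ``independence'' allocation $A^{\$}_{ij}=a_i p_j/S$ with $S=\sum_k a_k$, and check directly that its row sums are $a_i$ (using $\sum_j p_j=S$) and its column sums are $p_j$, so it is feasible; the degenerate case $S=0$ forces $a=p=0$, where $A=M$ works.

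For the constrained case the forbidden cells $\{(i,j):M_{ij}=0\}$ turn this into a transportation problem on the bipartite support graph, and the natural tool is max-flow/min-cut. I would build a network with a source $s$ and sink $t$, an arc $s\to i$ of capacity $a_i$ for each asset class, an arc $j\to t$ of capacity $p_j$ for each portfolio, and an arc $i\to j$ of capacity $+\infty$ (equivalently any capacity $\ge\sum_k a_k$, keeping everything finite) whenever $M_{ij}>0$. A flow saturating every source arc has value $\sum_i a_i$, and by conservation it then saturates every sink arc, so its arc values $A^{\$}_{ij}$ realise the required marginals and respect the zero pattern; conversely any feasible $A^{\$}$ is such a flow. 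Hence feasibility is equivalent to the max flow equalling $\sum_i a_i$, and by the max-flow/min-cut theorem (valid for real capacities, the flow polytope being compact) to every cut having capacity at least $\sum_i a_i$.

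The heart of the argument is the cut analysis. Any cut of finite capacity cannot cross an infinite-capacity middle arc, so writing $S_A$ and $S_P$ for the asset and portfolio nodes on the source side, the no-crossing condition forces $\bigcup_{i\in S_A}J_i\subseteq S_P$, and the cut capacity equals $\sum_{i\notin S_A}a_i+\sum_{j\in S_P}p_j$. Since $\sum_i a_i=\sum_{i\notin S_A}a_i+\sum_{i\in S_A}a_i$, the inequality $\mathrm{cap}\ge\sum_i a_i$ reduces exactly to $\sum_{j\in S_P}p_j\ge\sum_{i\in S_A}a_i$. Applying hypothesis (2) with $I=S_A$ gives $\sum_{j\in\bigcup_{i\in S_A}J_i}p_j\ge\sum_{i\in S_A}a_i$, and enlarging the sum from $\bigcup_{i\in S_A}J_i$ to $S_P$ (legitimate as $p_j\ge0$) yields the bound. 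Thus (2) alone forces every finite cut to be large enough, the max flow attains $\sum_i a_i$, and a feasible $A^{\$}$, hence $A$, exists.

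Finally I would note that under the balance condition hypotheses (1) and (2) are equivalent: the portfolios unreachable from a set $I$ of asset classes have support $I_j$ disjoint from $I$, and feeding that complementary portfolio set into (1) rearranges into (2) for $I$; so only one of the two needs checking and the symmetric form is retained for interpretability. I expect the main difficulty to be bookkeeping rather than conceptual: making the passage between the combinatorial minimum cut and the index-set inequalities airtight, in particular treating the infinite-capacity arcs via a large finite truncation that never appears in a minimum cut, and handling degenerate rows or columns with $a_i=0$ or $p_j=0$ so that the recovered $A$ still has unit column sums and vanishes wherever $M$ does.
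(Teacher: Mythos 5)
Your proof is correct, but it takes a genuinely different route from the paper's. For the unconstrained case the paper constructs a ``greedy'' allocation $A^{\$}_{ij}=\min\{a_i-\sum_{k<j}A^{\$}_{ik},\,p_j-\sum_{k<i}A^{\$}_{kj}\}$ (a northwest-corner-style rule) and verifies the marginals by induction, whereas you exhibit the product allocation $a_ip_j/S$ directly; both work, and yours is shorter. The real divergence is in the constrained case: the paper proves necessity of conditions (1)--(2) by a counting contradiction and then simply asserts that the greedy rule, modified to skip forbidden cells, produces a feasible solution; you instead encode the problem as a bipartite transportation network and invoke max-flow/min-cut, with the cut analysis showing that condition (2) (applied to the source-side asset set $S_A$, using $\bigcup_{i\in S_A}J_i\subseteq S_P$) bounds every finite cut below by $\sum_i a_i$. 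Your approach buys genuine rigour here: a fixed-order greedy scan over the support of $M$ can in fact strand supply even for feasible problems (e.g.\ $M=\bigl(\begin{smallmatrix}1&1\\1&0\end{smallmatrix}\bigr)$, $a=p=(1,1)$, scanned column-major, saturates cell $(1,1)$ and leaves $a_2$ unallocatable), so the paper's sufficiency argument as written has a gap that your min-cut argument closes; this is essentially the Gale--Hoffman feasibility criterion. Your closing observation that (1) and (2) are equivalent under $\sum_i a_i=\sum_j p_j$ is also correct and is not noted in the paper. One small thing to keep if you write this up fully: you prove sufficiency of the conditions but should also record the (easy) necessity direction, which the paper does include.
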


\begin{proof}
In the first case, we can define a rebalancing process that does not depend on $M$ at all and instead behaves in a `greedy' way. That is, we can define 
\[ A^{\$}_{ij} = \min\left\{a_i - \sum_{k=1}^{j-1}A^{\$}_{ik}, p_j - \sum_{k=1}^{i-1}A^{\$}_{kj}\right\}\ge 0,\] by starting at $i=j=1$, then iterating over $i$ and then incrementing $j$ and iterating again over $i$ etc. Here we define the sums to be zero if $j=1$ or $i=1$ respectively. By induction one can show that 
\[ \sum_{i=1}^k A^{\$}_{ij} = \min\left\{p_j, \sum_{i=1}^k (a_i - \sum_{s=1}^{j-1}A^{\$}_{is})\right\} \quad \forall k,\] and 
\[ \sum_{j=1}^k A^{\$}_{ij} = \min\left\{a_i, \sum_{j=1}^k (p_j - \sum_{s=1}^{i-1}A^{\$}_{sj})\right\} \quad \forall k.\]
Induction of the first equation on $j$ and the second one on $i$ gives that 
\[ \sum_{i=1}^m A^{\$}_{ij} = p_j\] for all $j$ and 
\[ \sum_{j=1}^n A^{\$}_{ij} =a_i\] for all $i$, so that this is a valid rebalancing process. We call it greedy as it allocates the largest amount possible for the first element and the subsequent largest amount possible in the next element, completely ignoring the information in $M$. 

In the case where $M$ has zeros, we require that $A^{\$}_{ij}=0$ whenever $M_{ij}=0$. Say an $(M,a,p)$ is feasible and such an $A^{\$}$ exists, however that the conditions in the lemma are not satisfied. Then without loss of generality there is a collection $J$ with corresponding $I_j$ such that 
\[\sum_{i\in\cup_{j\in J} I_j}a_{i}< \sum_{j\in J} p_{j}.\] Given such an $A^{\$}$ exists, we must have $\sum_{j=1}^nA^{\$}_{ij}=a_i$  and we also know that
\[\sum_{i\in J_i}A^{\$}_{ij}=\sum_{i=1}^mA^{\$}_{ij} = p_j.\] This gives 
\[\sum_{j\in J}\sum_{i\in I_j}A^{\$}_{ij}=\sum_{j\in J} p_j.\]
Putting these together we see that 
\begin{align*}
    \sum_{j\in J} p_j & = \sum_{j\in J}\sum_{i\in I_j}A^{\$}_{ij}\\
    & \le \sum_{j\in J}\sum_{i\in \cup_{j}I_j}A^{\$}_{ij}\\
    &= \sum_{i\in \cup_{j}I_j} \sum_{j\in J}A^{\$}_{ij}\\
    &= \sum_{i\in \cup_{j}I_j} a_i
\end{align*}
which is a contradiction. This means that for $A$ to exist in this case the conditions in the lemma must be satisfied. Conversely, if the conditions in the lemma are satisfied, we do have a feasible rebalancing problem and the greedy process can be modified to be as before however forcing the $i,j$-th element to be zero whenever $M_{ij}=0$. The greedy process then constructs a feasible solution $A$ to the rebalancing problem. 

Note that in the case that $M$ has all elements positive, then regardless of $J$ we have $I_j=\{1,\ldots,m\}$ for all $j\in J$ and regardless of $I$ we have $J_i=\{1,\ldots,n\}$ for all $i\in I$. This means the requirements in \Cref{lem:feasibilityRebal} reduce to simply checking that any subset of elements of $a$ must sum to something less than or equal to the sum of all elements of $a$, and any subset of elements of $p$ must sum to something less than or equal to the sum of all elements of $p$. Both of these conditions are trivially true due to the non-negativity of $a$ and $p$ which means any $(M,a,p)$ with $M$ having all entries positive is feasible. 
\end{proof}

In general, these conditions are not restrictive and matrices $M$ with zeros are usually feasible, although more zeros makes the problem more restrictive. In the sequel, we usually require that if $M$ has zeros then $A$ must have zeros in the same locations, and then feasible rebalancing problems $(M,a,p)$ satisfy conditions (1) and (2) from \Cref{lem:feasibilityRebal}.

\subsubsection{Non-negativity conditions} \label{subsubsec:negativity}
Before we proceed to give examples of rebalancing process that are used in practice, we should discuss the non-negativity conditions and, if we relaxed these conditions, what the interpretation of a negative solution would be in our context. We note that \citet[pg.~15]{lahr2004} suggest negative solutions in economics contexts are not easy to interpret in economic terms.

In the portfolio management context we can however interpret negative elements as follows. If one of the elements of $a$ contained a negative value, one could interpret this as a class of debt or liabilities to the portfolio holder rather than assets, for example a loan taken to fund the portfolio. If one of the elements of $p$ contained a negative value, one could interpret this to mean that this portfolio has been used to fund a loan (i.e. is a liability of someone, a form of \emph{leverage}) where they owe the contents of the portfolio plus the interest it would have earned on the underlying investments, which is currently attributed to the other portfolios. 

If one of the elements in $M$ (or equivalently in $A$) is negative, this would mean that the corresponding portfolio is \emph{leveraging} this asset class to increase its exposure to the other asset classes. For example, if the target allocation to portfolio $j$ was $110\%$ shares, and $-10\%$ bonds, then any positive movement for bonds would mean this portfolio paid the other portfolios the corresponding movements, and similarly positive movements for the shares would result in the other portfolios paying this portfolio the corresponding movements. 

While all of these negativity conditions are possible, in practice the last one would likely be the most useful in this setting, and can also arise from some of the rebalancing processes we detail below. Also relevant here is the Superannuation regulations which strictly prevent leveraging in almost all situations, as in the SIS Act \cite[\S67
\&\S97]{SISact}. However, leveraging is a common financial concept and therefore all of these interpretations have merit in the general portfolio management context.

We now detail specific rebalancing processes, beginning with the banker process.
\subsection{Banker rebalancing process} \label{subsec:banker}

This common rebalancing process selects a specific portfolio to be called the \emph{banker}. In this process, all portfolios except the selected `banker' portfolio are given their target asset allocations, and the banker portfolio is given the remaining funds. There are limitations on this, as the banker must be able to contain all the asset classes, and there must be enough of each asset class so that there are no negative exposures for the banker. This usually means the banker portfolio is chosen to be the largest portfolio in the fund.

In the case of Superannuation funds, they have default portfolios that often contain the largest proportion of their assets and this is usually the portfolio chosen to be the banker.

Specifically, this process works as defined in \Cref{alg:BankerRebal}.

\begin{algorithm}
\caption{Banker Rebalancing process} \label{alg:BankerRebal}
\begin{algorithmic} 
\Require Non-negative $n\times m$ matrix $M$ with columns summing to 1, non-negative vectors $a\in \R^n$ and $p\in \R^m$, with $\sum_ia_i=\sum_jp_j$. Banker identified as $j_1 \in \{ 1,2,\ldots, m\}$ with $p_{j_1}>0$.

\State \!\!\!\!\!\!{\bf Output:} Non-negative $n\times m$ matrix $A$ with columns summing to 1 such that $Ap=a$.

\Ensure $A$ is an $n\times m$ matrix of zeros.

\For{$i=1,2,\ldots,m$}
\For{$j=1,2,\ldots, j_1-1,j_1+1,\ldots,m$}

\State{$A_{ij} = M_{ij}$}

\EndFor

\State{$A_{ij_1} = \frac{a_i-\sum_{j=1, j\ne j_1}^nA_{ij}p_j}{p_{j_1}}$}

\If{$A_{ij_1}<0$}

\State{\bf Return} ``Problem is infeasible''

\EndIf

\EndFor

\State{\bf Return} $A$

\State{\bf Terminate algorithm}

\end{algorithmic}
\end{algorithm}

We can check that $Ap=a$ by the following calculation
\[ [Ap]_i = \sum_{j\ne j_1}M_{ij}p_j + \frac{(a_i-\sum_{j\ne j_1}M_{ij}p_j)}{p_{j_1}}p_{j_1} = a_i.\] We additionally require \[a_i\ge \sum_{j\ne j_1}M_{ij}p_j\] for each $i=1,\ldots, m$ for this process to be well defined and give a non-negative solution for $A$. This means there is enough of each asset class to satisfy the target allocation for all portfolios but the banker and for there to be a non-negative amount left over for the banker.

While there is usually a non-negative solution for $A$, extreme market events can cause this requirement to fail and the banker process would return an infeasible, negative solution. In practice funds usually manually adjust for this, either readjusting the asset allocations or changing the target allocations so that a feasible solution exists. However, if negative allocations are allowed, this process could be used without adjustment.

The amount $A_{ij}-M_{ij}$ is called the \emph{linear overweight} or \emph{underweight} for this asset allocation compared to the target for the asset class $i$ and portfolio $j$. The justification of this process is that the largest portfolio can absorb the overweights or underweights of an asset class more easily, and choosing the banker to be the largest portfolio means that the linear overweights for this asset class will be the smallest of all other possible choices of banker. However, this tends to ignore the target asset allocation of the banker altogether. 

It can also create inequity, as, for instance, if an asset class falls in value then applying this process will result in the banker effectively `selling' its assets in this asset class to the other portfolios, and vice versa when markets rise. This is contrary to the usual `buy low, sell high' external rebalancing process rules, which means that this process is disproportionately affecting the banker portfolio with respect to market movements. We discuss this further in \Cref{subsec:comparisons}. 

\subsection{Linear rebalancing process} \label{subsec:linear}
This process involves calculating the linear overweight of all of the portfolios compared to what they would be at their targets, and distributing this to each of the portfolios. No specified banker portfolio is required. The calculations are not complex, as in the following algorithm.

\begin{algorithm}
\caption{Linear Rebalancing process} \label{alg:LinearRebal}
\begin{algorithmic} 
\Require non-negative $n\times m$ matrix $M$ with columns summing to 1, non-negative vectors $a\in \R^n$ and $p\in \R^m$, with $\sum_ia_i=\sum_jp_j$.

\State \!\!\!\!\!\!{\bf Output:} Non-negative $n\times m$ matrix $A$ with columns summing to 1 such that $Ap=a$.

\Ensure $A$ is $n\times m$ matrix of zeros. Vector $y = Mp$, the vector of asset class totals if the fund was at the target asset allocation.

\State{Calculate $d=\frac{1}{\sum_{i=1}^n(a_i)}(a-y)$ the linear overweight/underweight}
\For{$i=1,2,\ldots,n$}
\For{$j=1,2,\ldots,m$}

\State{Set $A_{ij} = M_{ij}+d_i$}

\EndFor
\EndFor

\State{\bf Return} $A$

\State{\bf Terminate algorithm}

\end{algorithmic}
\end{algorithm}
This process distributes the linear overweights/underweights for each asset class compared to the target asset allocation $M$ to all portfolios. Here we have 
\[A_{ij} = M_{ij} + d_i, \quad \forall i,j,\] where $d_i$ is the linear overweight/underweight \[d_i = \frac{a_i-\sum_j M_{ij}p_j}{\sum_k a_k}.\]
We can check that $Ap=a$ by the following calculation
\[ [Ap]_i = \sum_{j}(M_{ij}+d_i)p_j =\sum_{j}M_{ij}p_j +\sum_j \frac{a_i - \sum_k M_{ik}p_k}{\sum_k a_k} p_j =  \sum_{j}M_{ij}p_j+a_i - \sum_{j}M_{ij}p_j = a_i,\] using that $\sum_ia_i=\sum_jp_j$. 

This process creates some sense of equity, as it means all portfolios receive these linear underweights and overweights evenly. However, it may be undesirable as it can mean some portfolios significantly change their allocations compared to others. 

For example, consider a portfolio that has a small allocation to an asset class, say $2\%$, compared to another portfolio that has a larger allocation at say $10\%$ to that asset class. If the asset class has a linear overweight of $2\%$, then the first portfolio will be rebalanced to an allocation of $4\%$ which is double its target asset allocation, while the second portfolio will be rebalanced to an allocation of $12\%$ which is only a $20\%$ increase. If the market then falls so that this asset class is at a neutral weight, the portfolios must be rebalanced to their target. 

In practice, this will have been achieved by the first portfolio `buying' from the the other portfolio at the top of the market for that asset class, and `selling' at the bottom of the market, contrary to usual `buy low, sell high' rules, and can disadvantage such a portfolio in the long run. The doubling of the exposure may also cause other asset allocation issues, such as significantly increasing the risk profile of some portfolios compared to others.

\subsection{Hybrid and proportional processes}\label{subsec:hybrid}
Many funds use hybrid banker and linear processes. Here, funds may group portfolios into categories, and, for example, apply the linear process to each category, and then select a banker portfolio within each category and apply the banker process within the categories. Another approach is to apply a process up to a certain limit to portfolios before resorting to a banker process, for example applying the linear process until a portfolio reaches a percentage linear overweight/underweight of a certain asset class (or group of asset classes, for example, a group consisting of all of the illiquid asset classes).

Funds have also considered trying to distribute the overweights/underweights proportionally instead of linearly. 
Here instead of the linear overweight/underweight $d_i$, the proportional overweight/underweight is calculated 
\[ q_i = \frac{a_i}{\sum_{j}M_{ij}p_j} ,\] and the aim is to give the portfolios the values $A_{ij} = M_{ij}q_i$. However, this does not ensure that $Ap=a$. Some funds have opted to avoid this issue by subsequently applying a banker process. In general however, the market-invariant rebalancing process we study in \S\ref{sec: MarketinvariantRebal} is a way to resolve this issue.

\subsection{Optimisation}\label{subsec:optimisation}
The rebalancing problem $(M,a,p)$ can be phrased as an optimisation problem. This could be the problem of minimising some objective function $D$ with output in $\R$ written in terms of $M$, $a$, $p$ and the output allocation $A$. This forms an optimisation program as follows
\begin{align*}
     &\min_{A} D(A,M,a,p)\\
     \text{such that}\quad &Ap = a,\\
     &\sum_iA_{ij}=1, \quad \forall j =1, \ldots, n,\\
     &A_{ij}\ge 0, \quad \forall i=1,\ldots, m,\;j =1, \ldots, n.
\end{align*} 
The choices of $D$ are numerous --- an easy example is to take $D$ to be the sum of the squared differences of the elements of $A$ and $M$. Solving by standard methods including Lagrange multipliers and convex optimisation as in \citet{Boyd}, or Newton-Raphson methods as in \cite[Chpt.~4]{Bonnans2003}, is usually feasible for different choices of $D$ and they can improve on the issues mentioned in the banker and linear rebalancing process. 

The previous two processes can be converted into optimisation problems. In the banker process, one choice of objective function which would give the same solution would be 
\[D(A,M,a,p) = \sum_{i,j\ne j_b} (A_{ij}-M_{ij})^2.\] Here, any metric to compute the distance between $A_{ij}$ and $M_{ij}$ would do, as given that the banker $j_b$ is not included in the sum, the optimisation process would set $A_{ij}=M_{ij}$ for all portfolios except the banker. For the linear case, the optimisation problem could have objective function 
\[D(A,M,a,p) = \sum_{i,j} \left(A_{ij}-M_{ij} - \frac{a_i-\sum_{k}M_{ik}p_k}{\sum_ka_k}\right)^2,\] and here the optimisation process would set $A_{ij}-M_{ij} - \frac{a_i-\sum_{k}M_{ik}p_k}{\sum_ka_k} = 0$ for each $i,j$. This means that $A_{ij}=M_{ij}+d_i$ for each $i,j$ for the linear overweight/underweight $d_i$.

In general, is there a natural choice of objective function and would we interpret it? The squared differences are a usual first guess in optimisation as it can be interpreted as a least-squares estimator. This and several extensions are discussed in \citet[\S4]{lahr2004}, noting that there are interpretations as the Person's $\chi^2$ statistic or as the Neyman's $\chi^2$ statistic, and \citet[\S2.6]{Senata2006} discusses the squared differences as a second order Taylor series approximation. These objective functions make sense when the focus is to estimate or forecast $A$, but have little intuitive meaning for our rebalancing processes.

As mentioned in the introduction, market data such as momentum overlays and risk measures including Variance at Risk (VaR) used in  \cite{Stubbs2007}, \cite{Iancu2014} and \cite{Zhang2019} could also be introduced to determine an optimisation problem, but we do not do this here. We will return to the question of what might be a natural choice of objective function in \S\ref{sec: MIoptimisation}. A summary of other optimisation techniques documented in the literature is available in \cite[Ch.~18]{Handbook2018}.

\begin{remark} 
\label{subsec:supplydemandgeneral}
The requirement that $Ap=a$ means this problem is a supply equal to demand problem. An interpretation of this is that the supply of each asset class is sufficient to meet the total demand from the portfolio, and that all assets are fully allocated. There are many such problems that require supply to equal to demand. 

One example is from electricity networks, where supply of electricity must equal demand of electricity and certain properties of powerlines determine how much electricity flows along each line. We detail this further in \Cref{app: electricitynetworks}.

We note that input-output models in economics also fall into this category as we mentioned in \S\ref{sec:introduction}. In particular, \citet[\S 3.8]{Bacharach1971} discusses such supply-demand constraints for network theory, and the corresponding optimisation approaches used to find solutions. 
\end{remark}

\section{The market-invariant rebalancing process} \label{sec: MarketinvariantRebal}

We now present the market-invariant rebalancing process. The aim of this process is to minimise or indeed eliminate any inequitable treatment that may occur between portfolios due to market fluctuations. For example, if a market shock event occurred where only one asset class declined in value, then each time the rebalancing process was undertaken no portfolio should need to buy or sell between other portfolios to balance this, which may advantage or disadvantage portfolios upon further market movements. Note that such a market event would be an opportunity for an external rebalance instead.

In the following sections, we will use the operation $\odot$ to mean multiplication element-wise, for example 
\begin{align*}
    \begin{bmatrix}a_1 & a_2\\ a_3 & a_4\end{bmatrix}\odot \begin{bmatrix}b_1 & b_2\\b_3 & b_4\end{bmatrix} = \begin{bmatrix} b_1a_1  & b_2a_2\\b_3a_3 & b_4a_4\end{bmatrix},
\end{align*}
while the operation $\oslash$ will mean to divide element-wise, for example
\begin{align*}
    \begin{bmatrix}a_1 & a_2 \\ a_3 & a_4\end{bmatrix}\oslash \begin{bmatrix}b_1 & b_2\\b_3 & b_4\end{bmatrix} = \begin{bmatrix}a_1 & a_2\\ a_3 & a_4\end{bmatrix}\odot \begin{bmatrix}\frac{1}{b_1} & \frac{1}{b_2}\\ \frac{1}{b_3} & \frac{1}{b_4}\end{bmatrix} = \begin{bmatrix} \frac{a_1}{b_1}  & \frac{a_2}{b_2}\\ \frac{a_3}{b_3} & \frac{a_4}{b_4}\end{bmatrix}.
\end{align*}
In practice we will want to be able to take a row vector $a$ and a matrix $B$ and multiply each column of $B$ by the corresponding element in $a$, then we will write this as follows
\begin{align*} (e_m^Ta) \odot B &= \begin{bmatrix} 1 &1 &\ldots& 1\end{bmatrix}^T\begin{bmatrix} a_1 &a_2 &\ldots& a_k\end{bmatrix} \odot \begin{bmatrix}b_{11} & b_{12} &\ldots & b_{1k} \\b_{21} & b_{22}& \ldots & b_{2k}\\ \vdots &&\ddots&\\b_{m1} &b_{m2}& \ldots & b_{mk}  \end{bmatrix} \\
&=\begin{bmatrix}a_{1} & a_{2} &\ldots & a_{k} \\a_{1} & a_{2}& \ldots & a_{k}\\ \vdots &&\ddots&\\a_{1} &a_{2}& \ldots & a_{k}  \end{bmatrix}\odot \begin{bmatrix}b_{11} & b_{12} &\ldots & b_{1k}\\ b_{21} & b_{22}& \ldots & b_{2k}\\ \vdots &&\ddots&\\b_{m1} &b_{m2}& \ldots & b_{mk}  \end{bmatrix}\\
&=\begin{bmatrix}b_{11}a_1 & b_{12}a_2 &\ldots & b_{1k}a_k \\b_{21}a_1 & b_{22}a_2& \ldots & b_{2k}a_k\\ \vdots &&\ddots&\\b_{m1}a_1 &b_{m2}a_2& \ldots & b_{mk}a_k  \end{bmatrix}.
\end{align*}
Here $e_m$ is a row vector of length $m$ with entries all ones. Note that $ae_k^T = \sum_i{a_i}$ and $a^Te_k\odot B$ would result in each row of $B$ being multiplied by the corresponding element of $a$ provided the number of rows in $B$ is the same as the number of element of $a$ (so $m=k$ in the example above). 

We now define the property of \emph{market-invariance}, which precisely captures the idea that no portfolio should need to buy or sell from other portfolios due to market movements.
\begin{definition} \label{def:marketinvar}
Let $(M,a,p)$ be a rebalancing problem. We say that a rebalancing process has the property of \emph{market-invariance} if it produces the asset allocation $A$ such that
\begin{itemize}
    \item[(1)] If $Mp = a$ then $A=M$; and
    \item[(2)] Say $A$ solves the rebalancing problem so that $Ap=a$. Let $(x_1,\ldots, x_m)^T\in (0,\infty)^m$ represent a proportional change in the value of assets $a=(a_1,\ldots, a_m)^T$ to 
    \[a_x = a\odot x =(x_1a_1,\ldots, x_ma_m)^T,\] 
    and corresponding changes to the portfolio values where $p$ becomes $p_x$ with
    \[p_{x,j}=p_j\sum_{i}A_{ij}x_i =[(A\odot (xe_n))p]_j.\] 
    Then the rebalancing process applied to the rebalancing problem $(M,a_x,p_x)$ must give the asset allocation
    \[A^{\$}_{ x} =A^{\$} \odot (xe_n) = (pe_m)^T\odot A \odot(xe_n) \quad \text{ and } A_x = (pe_m)^T\odot A \odot(xe_n) \oslash (p_xe_m)^T \]
    (where here $e_k=(1,\ldots, 1)$ is a row vector with $k$ elements). As indices we have 
    \[A_{x,ij} = \frac{x_iA_{ij}p_j}{p_{x,j}} \quad \text{ and } \quad A^{\$}_{x,ij} = x_jA_{ij}p_j .\]    
\end{itemize} 
\end{definition}

From the outset, this definition seems to give us neither existence nor uniqueness of the solution $A$, or even a way to find $A$. In \Cref{defn:market-invariantrebal} we will show this definition is equivalent to solving a system of polynomial equations, so that solving these polynomial equations for a non-negative solution will determine $A$. In \Cref{thm: market-invariantSequence} and \Cref{cor: market-invariantsequence} we will show that these equations always have a non-negative solution, so existence is guaranteed, and we see this explicitly for $m=n=2$ in \S\ref{subsec:mn2}. Finally, we see that \Cref{thm: bachuniqueness} determines uniqueness. This means this property uniquely determines a rebalancing process, which we call the \emph{market-invariant} rebalancing process. It turns out that this process also gives a way to distribute asset overweights/underweights proportionally, which we mentioned in \S\ref{subsec:hybrid}.

\begin{remark} \label{rem: equirelation}
The property of market-invariance can be used to define an equivalence relation $\sim$ as follows for $m\times n$ matrices. We say for $m\times n$ matrices $B_1$ and $B_2$ that $B_1\sim B_2$ if there are $m\times m$ and $n\times n$ diagonal matrices $R$ and $S$ respectively, with positive diagonal, such that $RB_1S = B_2$. We can see this relation is reflexive, that is $B_1\sim B_1$, by setting $R$ and $S$ to the identity matrices; it is reflexive  (if $B_1 \sim B_2$ then $B_2 \sim B_1$) by considering $R^{-1}$ and $S^{-1}$; and it is transitive by considering if $B_2 = R_1B_1S_1$, and $B_3 = R_2B_2 S_2$ then $B_3 = R_1R_2B_2S_2S_1$, so that $B_1\sim B_2$ and $B_2\sim B_3$ implies $B_1\sim B_3$. 

In our case, we have $A\sim A^{\$}_{ x}$, where we take $R$ to be the $m\times m$ diagonal matrix with $x$ along the diagonal, and $S$ to be the diagonal $n\times n$ matrix with $p$ along its diagonal. Similarly $A\sim A_x$, with the same $R$ however $S$ has elements $\frac{p_j}{p_{jx}}=[p\oslash p_x]_j$ on its diagonal. As the relation is an equivalent relation, there is a set of all $m\times n$ matrices that are related to $A$ (called an \emph{equivalence class}). We will use this equivalence relation to discuss how the property of market-invariance gives a unique set of polynomial equations as in \Cref{defn:market-invariantrebal}.
\end{remark}

\begin{proposition}\label{defn:market-invariantrebal}
For each rebalancing problem $(M,a,p)$, if a rebalancing process has the property of market-invariance from \Cref{def:marketinvar} then determining asset allocation $A$ is equivalent to finding non-negative column vectors $x'\in (0,\infty)^m$, $p'\in(0,\infty)^n$, $a'\in(0,\infty)^m$ to give  \[A_{ij}=\frac{x_i'M_{ij}p_j'}{p_j} \quad \text{ and } \quad A^{\$}_{ij} =x_i'M_{ij} p_j'.\] Here we require that $a'=Mp'$ and $p',x'$ solve the following system of polynomial equations, divided into the asset class equations
\begin{align}
f_1&=x_1'(M_{11}p_1'+\ldots+M_{1n}p_n')=a_1,\nonumber\\
&\vdots\nonumber\\
f_m&=x_m'(M_{m1}p_1'+\ldots+M_{mn}p_n')=a_m, \label{eqn:assetclassequations}
\end{align}
and the portfolio equations
\begin{align}
g_1&=p_1'(M_{11}x_1'+\ldots+M_{m1}x_m')=p_1,\nonumber\\
&\vdots\nonumber\\
g_n&=p_n'(M_{1n}x_1'+\ldots+M_{mn}x_m')=p_n. \label{eqn:portfolioequations}
\end{align}
As matrices, we can rewrite these equations as 
\begin{align*}
x'\odot Mp'&=a\\
p'\odot M^tx'&=p.
\end{align*} 
We can see that any non-negative solution $p',a'$ to \Cref{eqn:assetclassequations} and \Cref{eqn:portfolioequations} can be extended to a 1-dimensional space of solutions, $p'(t)$ and $x'(t)$, where $t\in (0,\infty)$ is a scaling factor. That is, $p'(t) = \frac{1}{t} p'$, and $x'(t) = tx$. These all have the same corresponding asset allocation $A$. 

Note that negative solutions can be created by taking $t\in (-\infty,0)$, although this still results in the same value of $A$. Also, these equations are not independent, as we have $\sum_ia_i=\sum_jp_j$, so one equation can be removed from the set \Cref{eqn:assetclassequations} and \Cref{eqn:portfolioequations} without changing the solutions. 

\end{proposition}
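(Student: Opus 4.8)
The plan is to realize the target problem $(M,a,p)$ as the image of an \emph{at-target} problem under a single market movement, and then let the two defining clauses of \Cref{def:marketinvar} do all the work. First I would fix candidate positive vectors $x'\in(0,\infty)^m$ and $p'\in(0,\infty)^n$ and consider the auxiliary rebalancing problem $(M,a',p')$ with $a'=Mp'$. Since $Mp'=a'$ by construction, clause (1) of market-invariance forces the process to return $A=M$ on this auxiliary problem, so its dollar allocation is $A^{\$,0}_{ij}=M_{ij}p_j'$.

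Next I would apply the market movement $x'$ to $(M,a',p')$ in the sense of clause (2). Using $A=M$ in the formula $p_{x,j}=p_j\sum_i A_{ij}x_i$, the moved assets are $(a'\odot x')_i=x_i'\sum_j M_{ij}p_j'$ and the moved portfolio totals are $(p'_{x'})_j=p_j'\sum_i M_{ij}x_i'$. The key observation is that this moved problem equals the original target $(M,a,p)$ precisely when
\begin{align*}
x_i'\sum_j M_{ij}p_j' &= a_i \quad \text{for all } i,\\
p_j'\sum_i M_{ij}x_i' &= p_j \quad \text{for all } j,
\end{align*}
which are exactly the asset class equations \Cref{eqn:assetclassequations} and portfolio equations \Cref{eqn:portfolioequations}, equivalently the matrix identities $x'\odot Mp'=a$ and $p'\odot M^tx'=p$. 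Under this identification clause (2) delivers the moved allocation $A^{\$}_{x',ij}=x_i'A_{ij}p_j'=x_i'M_{ij}p_j'$, the claimed biproportional form, and dividing column $j$ by $p_j$ (using $A_{ij}=A^{\$}_{ij}/p_j$ from \Cref{defn: rebalancingproblem}) yields $A_{ij}=x_i'M_{ij}p_j'/p_j$. This argument runs in both directions: any solution $(x',p')$ of the polynomial system produces, through market-invariance, the allocation $A^{\$}_{ij}=x_i'M_{ij}p_j'$, and conversely the only way $(M,a,p)$ can be reached as a moved at-target problem is through such a solution, so determining the market-invariant $A$ is equivalent to solving the system.

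Finally I would verify the two structural remarks. The scaling freedom follows by direct substitution: replacing $(x',p')$ by $(tx',p'/t)$ leaves every product $x_i'M_{ij}p_j'$ unchanged, hence fixes both $A$ and all of \Cref{eqn:assetclassequations}--\Cref{eqn:portfolioequations}, giving the one-parameter family $x'(t)=tx'$, $p'(t)=p'/t$, with $t<0$ reproducing the negative solutions and the identical $A$. The redundancy of one equation follows from summing: $\sum_i f_i=\sum_{i,j}x_i'M_{ij}p_j'=\sum_j g_j$, so the constraint $\sum_i a_i=\sum_j p_j$ makes the totals automatically consistent and lets any single equation be dropped.

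The main obstacle I anticipate is logical rather than computational: the argument pins down $A$ only once we know the polynomial system actually admits a positive solution, so that $(M,a,p)$ genuinely is a market movement of an at-target problem. Existence is not addressed here and is deferred to \Cref{thm: market-invariantSequence} and \Cref{cor: market-invariantsequence}; within this proposition I would therefore phrase the result as an \emph{equivalence of tasks} (find $A$ versus solve the system) rather than an unconditional existence statement. I would also be careful with the index conventions (the movement $x$ is indexed by the asset class $i$) and with degenerate columns where $p_j=0$, where $A_{ij}=M_{ij}$ is assigned directly as in \Cref{defn: rebalancingproblem}.
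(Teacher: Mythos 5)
Your proposal is correct and follows essentially the same route as the paper: realize $(M,a,p)$ as the image under a market movement $x'$ of an at-target problem $(M,a',p')$ with $a'=Mp'$, invoke clause (1) to get $A'=M$ there and clause (2) to transport it, and read off the asset class and portfolio equations as the matching conditions, with existence deferred to \Cref{thm: market-invariantSequence}. The only point treated more lightly than in the paper is the converse direction, where the paper additionally uses the equivalence relation of \Cref{rem: equirelation} to check that the equation-defined process remains consistent with clause (2) under subsequent market movements; your scaling and redundancy verifications are fine.
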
 

In the literature, the matrix $A^{\$}$ is called \emph{biproportional} to $M$ (or to $M\odot (pe_m)^T$), as in \citet[pg.~296--297]{Bacharach1965}.
\begin{proof}
Say we have a rebalancing problem $(M,a,p)$ and we do not have that $Mp=a$. We want to be able to use both points of \Cref{def:marketinvar} to determine $A$.

Say there exists another set of portfolio values $p'$ and asset class values $a'$ such that $Mp'=a'$, and we also have asset class movements $(x_1',\ldots,x_m')=x'$ such that 
\[a'_{x'}=a'\odot x'= Mp'\odot x'= a \quad \text{ and }\quad  p=p'\odot M^Tx'=p'_{x'}.\] 
Then we can solve $(M,a',p')$ using the solution $A'=M$ by \Cref{def:marketinvar}(1) and then we can solve $(M,a,p)$ by setting 
\[A^{\$}=A'\odot x'e=(p'e_m)^T\odot M\odot x'e_n, \quad \text{ so } \quad A^{\$}_{ij}=x'_iM_{ij}p'_j\]
and 
\[A=A'\odot x'e=(p'e_m)^T\odot M\odot x'e_n\oslash (pe_m)^T, \quad \text{ so } \quad A_{ij}=\frac{x'_iM_{ij}p'_j}{p_j}\]
using \Cref{def:marketinvar}(2). 
This means we need to find $x',p',a'$ such that $Mp'=a'$ and 
\[ \sum_ix_i'M_{ij}p_j' = p_j \quad \text{ and }  \sum_jx_i'M_{ij}p_j'=a_i,\] which are the asset class and portfolio equations. This means that if \Cref{def:marketinvar} gives a well defined rebalancing process implies there exists $a',x',p'$ that solve the asset class and portfolio equations.

For the reverse argument, it is clear that if $Mp=a$ then $x'=(1,\ldots, 1)$, $p'=p$, $a'=a$ solve the asset class equations, and if this is unique then the asset class and portfolio equations imply \Cref{def:marketinvar}(1). Using the notation from \Cref{def:marketinvar}(2), if $A$ solves $(M,a,p)$ using the asset class and portfolio equations, then $A^{\$} =RMS$ where $R$ has $x'$ along the diagonal and $S$ has $p'$ along the diagonal as in \Cref{rem: equirelation}, so $M\sim A^{\$}$. Then if $A_x$ solves $(M,a_x,p_x)$ using the asset class and portfolio equations using $x'',p'',a''$, then $M\sim A^{\$}_{x}$ where $R$ has $x''$ along the diagonal and $S$ has $p''$. As we have shown that $\sim$ is an equivalence relation, we must have that $A^\$\sim A^{\$}_{x}$. This means that being able to solve the asset class and portfolio equations for a unique $A$ implies that the property of market-invariance from \Cref{def:marketinvar} gives a well defined rebalancing process.

This means that \Cref{def:marketinvar} is well defined if and only if for each rebalancing problem $(M,a,p)$ there exists such $a',p',x'$ that give the solution $A$ uniquely. This is equivalent to saying that for each feasible $(M,a,p)$ there exists a unique element in the $\sim$-equivalence class of $M$ that has row sums equal to $a$ and column sums equal to $p$. 
\end{proof}

This result implies that our market-invariant rebalancing process is well defined provided there exists a (unique up to scaling) solution $x',p',a'$ that solves the system of equations in \Cref{defn:market-invariantrebal}. However, systems of polynomial equations are notoriously hard to solve analytically. They often reduce to solving a single high order polynomial equation, and the Abel-Ruffini theorem (see \cite{Ayoub1982,Cox2012}) tells us that there is no general analytic solution for polynomial equations above order 4. Instead, numerical methods such as Newton-Raphson methods (see \cite[Chpt.~4]{Bonnans2003}) can be used to solve for solutions where they exist.

In the cases $(m,n)=(2,2),(2,3),(3,2),(4,2)$ we can write down an analytic solution, which we do next, as well as consider the case $(m,n)=(3,3)$ and higher order cases. We then proceed to prove such solutions exist in general and give an algorithm to find these solutions. We also remark on uniqueness of solutions.

\subsection{\texorpdfstring{Solution for $(m,n)=(2,2)$}{Solution for (m,n)=(2,2)}} \label{subsec:mn2}

Here we determine the analytic formula for the market-invariant process in the case where the dimension of $M$ is $(m,n)=(2,2)$.

\begin{proposition}
Let $(M,a,p)$ be a feasible rebalancing problem. When $(m,n)=(2,2)$ the asset class and portfolio equations from \Cref{defn:market-invariantrebal} result in a unique solution $A$ to this rebalancing problem, and a one-dimensional solution set $x',p'$. 
\end{proposition}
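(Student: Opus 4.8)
The plan is to reduce the problem to finding the unique dollar-allocation matrix $A^{\$}$ that is biproportional to $M$ and has the prescribed margins, and then to recover $x',p'$ from it. Writing $A^{\$}=\begin{pmatrix} w_{11} & w_{12}\\ w_{21}&w_{22}\end{pmatrix}$, the margin constraints $\sum_i A^{\$}_{ij}=p_j$ and $\sum_j A^{\$}_{ij}=a_i$, together with the redundancy $a_1+a_2=p_1+p_2$, cut out only a one-parameter family of candidate matrices. Concretely I would set $s=w_{11}$ and read off $w_{12}=a_1-s$, $w_{21}=p_1-s$, $w_{22}=a_2-p_1+s$, checking that $a_1+a_2=p_1+p_2$ is exactly what makes the two available expressions for $w_{22}$ agree. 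Nonnegativity of all four entries then confines $s$ to the interval $\bigl(\max\{0,p_1-a_2\},\,\min\{a_1,p_1\}\bigr)$.

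The key observation is that biproportional scaling preserves the cross-ratio: since $A^{\$}_{ij}=x_i'M_{ij}p_j'$, the quantity $\frac{A^{\$}_{11}A^{\$}_{22}}{A^{\$}_{12}A^{\$}_{21}}$ equals $\frac{M_{11}M_{22}}{M_{12}M_{21}}=:\rho$, with the row factors $x_i'$ and column factors $p_j'$ cancelling. Hence, among the one-parameter family above, the matrices biproportional to $M$ are exactly those solving the single scalar equation $\frac{s(s+a_2-p_1)}{(a_1-s)(p_1-s)}=\rho$. I would then show this has a unique root in the feasible interval by a monotonicity argument: setting $h(s)=\frac{s(s+a_2-p_1)}{(a_1-s)(p_1-s)}$, the numerator $N(s)=s(s+a_2-p_1)$ is an upward parabola with roots $0$ and $p_1-a_2$, whose larger root is the left endpoint, so $N$ is positive and strictly increasing on the open interval, while the denominator $D(s)=(a_1-s)(p_1-s)$ is positive and strictly decreasing there. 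Therefore $h=N/D$ is strictly increasing, with $h\to 0$ at the left endpoint and $h\to +\infty$ at the right endpoint, so $h$ is a continuous strictly increasing bijection onto $(0,\infty)$ and meets each value $\rho>0$ exactly once.

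This pins down a unique $A^{\$}$, and dividing column $j$ by $p_j$ gives the unique $A$. To produce the one-dimensional solution set for $(x',p')$, I would recover the scaling factors from $x_1'p_j'=A^{\$}_{1j}/M_{1j}$ and $x_i'p_1'=A^{\$}_{i1}/M_{i1}$: these determine the ratios $p_2'/p_1'$ and $x_2'/x_1'$, so $(x',p')$ is fixed up to the single rescaling $x'(t)=tx'$, $p'(t)=\tfrac1t p'$ anticipated in \Cref{defn:market-invariantrebal}. The remaining relation $x_2'p_2'=A^{\$}_{22}/M_{22}$ is automatically consistent, since its consistency is precisely the cross-ratio identity already imposed.

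I expect the \emph{main obstacle} to be the uniqueness half rather than existence: existence of a margin-respecting matrix is immediate from the parametrization, but showing that exactly one member of the family is biproportional requires the cross-ratio reduction followed by the monotonicity of $h$. A secondary technical point is the degenerate case in which $M$ has a zero entry (so $\rho\in\{0,\infty\}$): there the requirement that $A^{\$}$ vanish in the same position fixes $s$ at an endpoint of the interval, and feasibility in the sense of \Cref{lem:feasibilityRebal} guarantees the resulting entries are nonnegative. I would treat the all-positive case as the main argument and dispatch the zero cases by this endpoint specialization.
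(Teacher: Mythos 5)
Your proof is correct, but it takes a genuinely different route from the paper. The paper eliminates $x_1',x_2'$ from the asset class equations, substitutes into the portfolio equations, equates the two resulting expressions to obtain a homogeneous quadratic in $(p_1',p_2')$, and applies the quadratic formula to get the ratio $K_\pm=p_1'/p_2'$; uniqueness of the nonnegative solution then comes from a sign analysis of the discriminant (the product $-4dc>0$ forces exactly one root $K_+$ to be positive). You instead work directly in the dollar coordinates $A^{\$}$, parametrize the margin-respecting matrices by the single entry $s=A^{\$}_{11}$, and characterize biproportionality to $M$ by invariance of the cross-ratio $A^{\$}_{11}A^{\$}_{22}/(A^{\$}_{12}A^{\$}_{21})=M_{11}M_{22}/(M_{12}M_{21})$, reducing everything to the strictly monotone scalar equation $h(s)=\rho$ on the feasibility interval. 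Your monotonicity/intermediate-value argument buys a cleaner and more conceptual proof of both existence and uniqueness of the \emph{nonnegative} solution (it is closer in spirit to the general uniqueness statement in \Cref{thm: bachuniqueness}), and it makes transparent exactly where nonnegativity enters (the interval endpoints). What the paper's computation buys is the explicit closed-form expressions for $K_\pm$, $x'$, $p'$ and $A^{\$}$ that are used in the worked example and echoed in the $(2,3)$ and $(2,4)$ cases; your $h(s)=\rho$ is of course also a quadratic in $s$ and could be solved explicitly if that formula were wanted. Your treatment of the zero-entry and rescaling-by-$t$ degeneracies matches the paper's.
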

\begin{proof}
If an element of $M$ is zero, the solution is trivial. For example, if $M_{11}=0$, then $A^{\$}_{11}=0$, $A^{\$}_{21}=p_1$, $A^{\$}_{12}=a_1$, $A^{\$}_{22}=a_2-a_1$, and $A_{ij}=A^{\$}_{ij}/(\sum_{j}A^{\$}_{ij})$. Then we can define $x'$ such that $x'_2 =\frac{A^{\$}_{21}}{M_{21}}$, and $p'$ such that $p'_j= \frac{A^{\$}_{2j}}{M_{2j}x_2}$ for $j=1,2$, which implies $p'_1=1$, and $x_1 =\frac{A^{\$}_{12}}{M_{12}p'_2}$. Multiplying $p'$ by $t$ and $x'$ by $1/t$ for $t\in \R\setminus{0}$ gives the full solution set for values of $p'$ and $x'$, while the unique positive solution occurs for all values of $t\ge 0$. 

If we assume that all elements of $M$ are positive then we can consider the asset class equations
\begin{align*}
f_1&=x_1'(M_{11}p_1'+M_{12}p_2')=a_1,\\
f_2&=x_2'(M_{21}p_1'+M_{22}p_2')=a_2,
\end{align*}
and the portfolio equations
\begin{align*}
g_1&=p_1'(M_{11}x_1'+M_{21}x_2')=p_1,\\
g_2&=p_2'(M_{12}x_1'+M_{22}x_2')=p_2.
\end{align*}

We can write
\begin{align} \label{eqn: xs}
x_1'=\frac{a_1}{M_{11}p_1'+M_{12}p_2'},\\\nonumber
x_2'=\frac{a_2}{M_{21}p_1'+M_{22}p_2'},
\end{align}

then we find that 
\begin{align*}
g_1&=p_1'\left(M_{11}\frac{a_1}{M_{11}p_1'+M_{12}p_2'}+M_{21}\frac{a_2}{M_{21}p_1'+M_{22}p_2'}\right)=p_1,\\
g_2&=p_2'\left(M_{12}\frac{a_1}{M_{11}p_1'+M_{12}p_2'}+M_{22}\frac{a_2}{M_{21}p_1'+M_{22}p_2'}\right)=p_2.
\end{align*}
 We can rewrite as 
\begin{align*}
\frac{p_1'}{p_1}\left(M_{11}a_1(M_{21}p_1'+M_{22}p_2')+M_{21}a_2(M_{11}p_1'+M_{12}p_2')\right)=(M_{11}p_1'+M_{12}p_2')(M_{21}p_1'+M_{22}p_2'),\\
\frac{p_2'}{p_2}\left(M_{12}a_1(M_{21}p_1'+M_{22}p_2')+M_{22}a_2(M_{11}p_1'+M_{12}p_2')\right)=(M_{11}p_1'+M_{12}p_2')(M_{21}p_1'+M_{22}p_2')
\end{align*}

Both equations have the same right-hand sides, so we can equate their left hand sides as follows
\begin{align*}
\frac{p_1'}{p_1}\left(M_{11}a_1(M_{21}p_1'+M_{22}p_2')+M_{21}a_2(M_{11}p_1'+M_{12}p_2')\right)=\\
\frac{p_2'}{p_2}\left(M_{12}a_1(M_{21}p_1'+M_{22}p_2')+M_{22}a_2(M_{11}p_1'+M_{12}p_2')\right)
\end{align*}

So this is now just a polynomial of degree two in our two unknowns $p_1'$ and $p_2'$. Expanding and collecting like terms we have 

\begin{align*}
(p_1')^2M_{11}M_{21}\frac{a_1+a_2}{p_1}- (p_2')^2M_{12}M_{22}\frac{a_1+a_2}{p_2}+&\\p_1'p_2'\left(M_{11}M_{22}(\frac{a_1}{p_1}-\frac{a_2}{p_2})+M_{21}M_{12}(\frac{a_2}{p_1}-\frac{a_1}{p_2})\right) &=0.
\end{align*}

Applying the quadratic formula we have 

\[p_1'=p_2'\left(\frac{-b\pm \sqrt{b^2-4dc}}{2d}\right)\]
where 
\begin{align*}
    b &= M_{11}M_{22}(\frac{a_1}{p_1}-\frac{a_2}{p_2})+M_{21}M_{12}(\frac{a_2}{p_1}-\frac{a_1}{p_2})\\
    c &=M_{12}M_{22}\frac{a_1+a_2}{p_2}\\
    d &= M_{11}M_{21}\frac{a_1+a_2}{p_1}
\end{align*}

So we have that one of $p_1'$ and $p_2'$ is free, and the other is a multiple of the first. 

We now show that we always get a unique positive solution for $A$ and $A^{\$}$, despite having a free variable. 
We have that 
\[x_i' = \frac{a_i}{M_{i1}p'_1+M_{i2}p_2'} = \frac{a_i}{p_2'(M_{i1}K_{\pm}+M_{i2})}  \quad \text{where} \quad K_{\pm} = \frac{-b\pm \sqrt{b^2-4dc}}{2d},\] so that 
\[A^{\$}_{ij}=x_i'M_{ij}p_j' =\frac{a_iM_{ij}p_j'}{p_2'(M_{i1}K_{\pm}+M_{i2})} =\begin{cases} \frac{a_iM_{ij}K_{\pm}}{M_{i1}K_{\pm}+M_{i2}} & j =1,\\
\frac{a_iM_{ij}}{M_{i1}K_{\pm}+M_{i2}} & j=2. \end{cases} \]
This means that $A^{\$}$ is independent of the choice of $p_1'$ or $p_2'$, and only depends on the choice of sign for $K_+$ or $K_{-}$

We will show only $K_{+}$ results in a non-negative solution for $A^\$$. Note that the discriminant $b^2-4dc$ is non-zero unless one or more elements of $M$ are zero, which reduces back to the first case. We have that $-dc>0$ regardless of $M,p,a$, and then we have $b\le |b| \le \sqrt{b^2} \le \sqrt{b^2-4dc}$, so that $-b- \sqrt{b^2-4dc}<0$ and $-b+ \sqrt{b^2-4dc}>0$, regardless of the values of $M,p,a$. This means we always have a unique positive solution, corresponding to $-b+ \sqrt{b^2-4dc}$ and $K_{+}$, giving the unique solution to the rebalancing problem. We note that \[A_{ij}=x_i'M_{ij}=\frac{A^{\$}_{ij}}{\sum_kA^{\$}_{kj}}.\]

\end{proof}

\begin{example}
Say we have the rebalancing problem of $(M,a,p)$ where
\[ M= \begin{bmatrix} 0.3 & 0.5\\ 0.7 & 0.5\end{bmatrix}, \quad p=\begin{bmatrix} 120\\180 \end{bmatrix}, a=\begin{bmatrix} 100\\200 \end{bmatrix}.\]  Using the market-invariant rebalancing process for $n=m=2$ we have solution 
\begin{align*} A^{\$} = \begin{bmatrix} 27.1003& 72.8997\\92.8997 & 107.1003\end{bmatrix},\quad &  A = \begin{bmatrix} 0.2258 & 0.7742\\0.4050 & 0.5950\end{bmatrix}\\ p'(t) = \frac{1}{t}\begin{bmatrix} 0.6196\\1 \end{bmatrix},\quad &x'(t) = t\begin{bmatrix} 145.7995\\214.2005\end{bmatrix}.\end{align*}

If we include the negative solutions as well, we see that there is another solution for $A$ corresponding to the negative sign in front of the square root in the scaling factor between $p_1'$ and $p_2'$. This solution is 
\begin{align*} A^{\$} = \begin{bmatrix} -332.1003& 432.1003\\452.1003 & -252.1003\end{bmatrix},\quad & A = \begin{bmatrix}-2.7675 & 2.4006\\3.7675 & -1.4006\end{bmatrix}, \\
p'(t) = \frac{1}{t}\begin{bmatrix} -1.2810\\1 \end{bmatrix}, \quad&x'(t) = t\begin{bmatrix} 864.2005\\-504.2005\end{bmatrix}.\end{align*}
For both $A$ solutions, we can allow $t$ to vary in $\R\setminus\{0\}$ and this means the solutions $p'(t)$ give two lines through the origin with the origin removed. This is a real smooth algebraic variety with four disconnected components each isomorphic to $(0,\infty)$ and these correspond to each quadrant in $\R^2$.

We attach the Matlab code for these calculations in \Cref{app:code}.
\end{example}

\subsection{More small cases} \label{subsec:mn234}
We can extend the results of the previous section to more small cases. 

\begin{proposition}
Let $(M,a,b)$ be a feasible rebalancing problem. When $(m,n) = (2,3), (3,2), (4,2),$ or  $(2,4)$ solutions to the asset class and portfolio equations in \Cref{defn:market-invariantrebal} can be written analytically. 
\end{proposition}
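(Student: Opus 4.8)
The plan is to exploit that in every listed case one of the two dimensions equals $2$, which collapses the whole system into a single polynomial equation in one variable. First I would reduce to the case $m=2$ by transposition: replacing $M$ by $M^T$ and interchanging the roles of $a$ with $p$ and of $x'$ with $p'$ turns the $(m,n)$ system $x'\odot Mp'=a$, $p'\odot M^Tx'=p$ into the $(n,m)$ system of exactly the same form, with an identical solution set. Hence $(3,2)$ and $(4,2)$ become $(2,3)$ and $(2,4)$, and it suffices to treat $(2,n)$ for $n=3,4$.

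Assume first that all entries of $M$ are positive; the degenerate cases where some entry of $M$ vanishes are handled separately and trivially, exactly as in the $(2,2)$ proof of \Cref{subsec:mn2}. Using the one-dimensional scaling freedom $p'\mapsto p'/t$, $x'\mapsto tx'$ noted in \Cref{defn:market-invariantrebal}, I would normalise $x_2'=1$. The portfolio equations then solve explicitly for each $p_j'$ as a rational function of the single remaining unknown $x_1'$, namely $p_j'=p_j/(M_{1j}x_1'+M_{2j})$. I would substitute these into a single asset-class equation -- only one is needed, since the identity $\sum_i a_i=\sum_j p_j$ makes the remaining asset-class equation a consequence of the portfolio equations -- to obtain
\[ x_1'\sum_{j=1}^{n}\frac{M_{1j}p_j}{M_{1j}x_1'+M_{2j}}=a_1 . \]
Clearing the denominators $\prod_{j}(M_{1j}x_1'+M_{2j})$ turns this into a polynomial equation of degree $n$ in the one unknown $x_1'$. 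This is the dual of the elimination carried out in \Cref{subsec:mn2} (there $x'$ was eliminated and the equations reduced to a quadratic in $p_1'/p_2'$), and for $(2,2)$ it likewise returns a quadratic.

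The degree count is the crux. The coefficient of $x_1'^{\,n}$ works out to $\big(\prod_k M_{1k}\big)\big(\sum_j p_j - a_1\big)=\big(\prod_k M_{1k}\big)a_2$, which is nonzero under the positivity assumption, so the equation has degree exactly $n=\max(m,n)$. For the four listed cases this degree is $3$ (for $(2,3)$ and $(3,2)$) or $4$ (for $(2,4)$ and $(4,2)$), both at most $4$, so Cardano's and Ferrari's formulas supply the roots in radicals. This is precisely the boundary beyond which the Abel--Ruffini theorem cited earlier forbids a general analytic solution, which explains why $(2,5)$, $(5,2)$ and the genuinely different $(3,3)$ case are excluded. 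From a root one recovers $x_1'$, then $x_2'=1$, then each $p_j'$, and finally $A^{\$}_{ij}=x_i'M_{ij}p_j'$ and $A_{ij}=A^{\$}_{ij}/\sum_k A^{\$}_{kj}$.

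The main obstacle I anticipate is not the elimination itself but the bookkeeping around root selection: the polynomial will in general have several real roots -- as the negative companion solutions already seen in \Cref{subsec:mn2} illustrate -- and one must identify the unique root yielding a positive, feasible $A^{\$}$. For the write-up I would either track sign conditions on the coefficients directly, as was done for $K_{\pm}$ in the $(2,2)$ case, or simply invoke the later existence and uniqueness results (\Cref{thm: market-invariantSequence}, \Cref{cor: market-invariantsequence} and \Cref{thm: bachuniqueness}) to guarantee that exactly one admissible positive root exists, and that it is the one the analytic formula must return.
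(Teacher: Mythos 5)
Your proposal is correct and follows essentially the same route as the paper's proof: normalise $x_2'=1$, solve the portfolio equations for $p_j'=p_j/(M_{1j}x_1'+M_{2j})$, substitute into the single remaining asset-class equation, and clear denominators to obtain a cubic (for $n=3$) or quartic (for $n=4$) in $x_1'$ with leading coefficient $\bigl(\prod_k M_{1k}\bigr)a_2>0$, solvable by the classical formulas. Your explicit transposition argument for reducing $(3,2)$ and $(4,2)$ to $(2,3)$ and $(2,4)$, and your deferral of root selection to the existence and uniqueness results, match the paper's treatment, which likewise declares the transposed cases analogous and invokes \S\ref{subsec:generalcase} to identify the positive root.
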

\begin{proof}
Starting with the case $(m,n)=(2,3)$, we will assume all $M_{ij}$ are non-zero else this reduces to the previous case. Note that the case $(m,n)=(3,2)$ is analogous.

We can consider the asset class equations
\begin{align*}
f_1&=x_1'(M_{11}p_1'+M_{12}p_2' + M_{13}p_3')=a_1,\\
f_2&=x_2'(M_{21}p_1'+M_{22}p_2' + M_{23}p_3')=a_2,
\end{align*}
and the portfolio equations
\begin{align*}
g_1&=p_1'(M_{11}x_1'+M_{21}x_2')=p_1,\\
g_2&=p_2'(M_{12}x_1'+M_{22}x_2')=p_2,\\
g_2&=p_3'(M_{13}x_1'+M_{23}x_2')=p_2.
\end{align*}
Given our previous discussions, to solve for positive solutions we can assume the scale parameter $x_2$ is equal to 1. Also, given the linear dependence of one of the equations, we will exclude $f_2$ from the calculations. 

Now, rewrite the portfolio equations as follows
\begin{align*}
    p_1' &= \frac{p_1}{M_{11}x_1'+M_{21}},\\
    p_2' &= \frac{p_2}{M_{12}x_1'+M_{22}},\\
    p_3' &= \frac{p_3}{M_{13}x_1'+M_{23}},
\end{align*}
and substitute these into $f_1$ to give 
\[ \frac{x_1'M_{11}p_1}{M_{11}x_1'+M_{21}} + \frac{x_1'M_{12}p_2}{M_{12}x_1'+M_{22}} + \frac{x_1'M_{13}p_3}{M_{13}x_1'+M_{23}} = a_1.\]
Rearranging and factorising in terms of $x_1'$ we have the following cubic equation
\begin{align*}
    B_1{x'}_1^3 + B_2{x'}_1^2 + B_3 x'_1 + B_4 =0
\end{align*}
with coefficients
\begin{align*}
    B_1 & = M_{11}M_{12}M_{13}( p_1+p_2+p_3 - a_1) = M_{11}M_{12}M_{13}a_2 > 0\\
    B_2 & = M_{11}M_{12}M_{23}(p_1+p_2-a_1) + M_{11}M_{22}M_{13}(p_1+p_3-a_1) + M_{21}M_{12}M_{13}(p_2+p_3-a_1)\\
    B_3 & = M_{11}M_{22}M_{23}(p_1-a_1) + M_{21}M_{12}M_{23}(p_2-a_1) + M_{21}M_{22}M_{13}(p_3-a_1)\\ 
    B_4 &= - a_1M_{21}M_{22}M_{23} <0.
\end{align*}
We can now apply the cubic formula (see for example \citet{Neumark1965}) for solving cubic polynomials in terms of $x_1'$ as follows. Define 
\begin{align*}
    \Delta_0 &= B_2^2 - 3B_1B_3,\\
    \Delta_1 &= 2B_2^3 - 9B_1B_2B_3 + 27B_1^2B_4,\\
    Q &=  \sqrt[\leftroot{-2}\uproot{2}3]{\frac{\Delta_1\pm \sqrt{\Delta_1^2 - 4\Delta_0^3}}{2}},\\
    \zeta &= \frac{-1+\sqrt{-3}}{2},\\
    y_k &=\begin{cases} -\frac{B_2}{3B_1} &\text{if } \Delta_0=\Delta_1=0 \\
    -\frac{1}{3B_1}\left(B_2 + \zeta^k Q + \frac{\Delta_0}{\zeta^kQ}\right), & \text{otherwise} 
    \end{cases}\qquad \text{for } k\in \{0,1,2\}.
\end{align*}
Then the solutions to the cubic are $x'_1=y_0,y_1,y_2$, and we can use the previous equations to find $p_1',p_2',p_3'$. If we use $x'_2=-1$, then all of the above still works however the formulae for $B_2$ and $B_4$ must be multiplied by $-1$. 

As cubics always contain one real root, then there are at least two real solutions corresponding to $x_2'=1$ and $x_2'=-1$. There may be up to $6$ solutions, however there may be fewer real solutions. For example, if $M_{ij}=0.5$ for all $i,j$, $a_1=a_2$, and $p_1=p_2=p_3=\frac{2}{3}a_1$, then we find 
\[ B_1 = B_2 = -B_3 = -B_4 = \frac{a_1}{8}\] and the cubic for $x_2'=1$ becomes 
\[ \frac{a_1}{8} (x_1'^3+x_1'^2-x_1'-1) = \frac{a_1}{8}(x_1'-1)(x_1'+1)^2.\] This has a double root for $x_1'=-1$, and a single root $x_1'=1$, which gives a positive solution. In \S\ref{subsec:generalcase} we show that there always exists a positive real root for the general $(m,n)$ case, so this formula must give this solution, and we have observed this in practice with the positive real root appearing for $x_1'=1$, and solution $y_1$.

The case $(m,n)=(2,4)$ (and analogously $(4,2)$) is very similar, however uses the quartic formula. Applying the same method results in a quartic 
\begin{align*}
    B_1{x'_1}^4 + B_2{x'_1}^3 + B_3 {x'_1}^2 + B_4x'_1+B_5 =0
\end{align*}
with coefficients
\begin{align*}
    B_1 & = M_{11}M_{12}M_{13}M_{14}( p_1+p_2+p_3 +p_4 - a_1) = M_{11}M_{12}M_{13}a_2 > 0\\
    B_2 & = M_{11}M_{12}M_{13}M_{24}(p_1+p_2+p_3-a_1) + M_{11}M_{12}M_{23}M_{14}(p_1+p_2+p_4-a_1) \\
    &+ M_{11}M_{22}M_{13}M_{14}(p_1+p_3+p_4-a_1)+M_{21}M_{12}M_{13}M_{14}(p_2+p_3+p_4-a_1)\\
    B_3 & = M_{11}M_{12}M_{23}M_{24}(p_1+p_2-a_1) + M_{11}M_{22}M_{13}M_{24}(p_1+p_3-a_1) + M_{21}M_{12}M_{13}M_{24}(p_2+p_3-a_1)\\
    &+M_{11}M_{22}M_{23}M_{14}(p_1+p_4-a_1) + M_{21}M_{12}M_{23}M_{14}(p_2+p_4-a_1) + M_{21}M_{22}M_{13}M_{14}(p_3+p_4-a_1)\\
    B_4 & = M_{11}M_{22}M_{23}M_{24}(p_1-a_1) + M_{21}M_{12}M_{23}M_{24}(p_2-a_1) \\
    &+ M_{21}M_{22}M_{13}M_{24}(p_3-a_1)+M_{21}M_{22}M_{23}M_{14}(p_4-a_1)\\
    B_5 &= - a_1M_{21}M_{22}M_{23}M_{24} <0.
\end{align*}
We can now apply the quartic formula, where we define
\begin{align*}
    \Delta_0 &= B_3^2-3B_2B_4+12B_1B_5\\
    \Delta_1 &= 2B_3^3-9B_2B_3B_4+27B_2^2B_5+27B_1B_4^2-72B_1B_3B_5\\
    r &= \frac{8B_1B_3-3B_2^2}{8B_1^3}\\
    s& = \frac{B_2^3-4B_1B_2B_3 + 8B_1^2B_4}{8B_1^3}\\
    T &=\frac{1}{2}\sqrt{-\frac{2}{3}r + \frac{1}{3B_1}\left(Q+\frac{\Delta_0}{Q}\right)} \\
    Q &= \sqrt[\leftroot{-2}\uproot{2}3]{\frac{\Delta_1+ \sqrt{\Delta_1^2 - 4\Delta_0^3}}{2}}\\
    y_{\pm,\pm}& = -\frac{B_2}{4B_1}\pm T\pm \frac{1}{2}\sqrt{-4T^2-2r+\frac{s}{T}}
\end{align*}
The solutions are $x_1'=y_{++},y_{+-},y_{--},y_{-+}$. Similarly if we consider the $x_2'=-1$ case, the formulae for $B_1, B_3, B_5$ must be multiplied by $-1$.

Again, if $M_{ij}=0.5$ for all $i,j$, $a_1=a_2$, and $p_1=p_2=p_3$, then we find 
\[ 2B_1 = B_2 = -B_4 = -2B_5 = \frac{a_1}{16}, \quad B_3 =0\] and the cubic for $x_2'=1$ becomes 
\[ \frac{a_1}{16} (x_1'^4+2x_1'^3-2x_1'-1) = \frac{1}{8}(x_1'-1)(x_1'+1)^3.\] 
This has a triple root for $x_1'=-1$, and a single root $x_1'=1$, which gives the positive solution we are looking for. Again \S\ref{subsec:generalcase} we show that there always exists a positive real root for the general $(m,n)$ case, so this formula must give this solution. 
\end{proof}

The next smallest case is $(m,n)=(3,3)$. However, unless there are one or more zeros in $M$, this in general results in a quintic polynomial or higher, for which there is no analytic form of the solution.

\subsubsection{Further comments on hybrid solutions}
While in the next section we will describe an algorithm to find a non-negative solution to the equations in \Cref{defn:market-invariantrebal}, we note that these analytic solutions can be used to form approximate solutions to this. For example, if we want to use the $(m,n)=(2,2)$ formula only, we can aggregate the portfolios and asset classes into groups. This works as follows. 

Set $I_1, I_2$ to be a non-empty partition of the asset classes $1,2,\ldots, m$, and $J_1$ and $J_2$ a non-empty partition of the portfolios $1,2 \ldots, n$. 
Set 
\[\hat{M}_{I_k,J_{\ell}} = \frac{\sum_{i\in I_k}\sum_{j\in J_\ell} M_{ij}p_j}{\sum_{j\in J_\ell}p_j}, \quad i=1,2, \quad j=1,2,\] which is like an aggregated target allocation for that group of asset classes and portfolios. Similarly aggregate the portfolio totals to 
\[\hat{p}=\left(\sum_{j\in J_1}p_j,\sum_{j\in J_2}p_j\right) \quad \text{ and } \quad \hat{a} = \left(\sum_{i\in I_1}a_i,\sum_{i\in I_2}a_i\right).\] 
Then apply the analytic rebalancing process solution for $(m,n)=(2,2)$ to $(\hat{M},\hat{a},\hat{p})$. Call the resulting asset allocation $A^1$ and then define
\[p^{k\ell}_j=A^1_{k\ell}p_j, \text{ for } j\in J_{\ell}, \quad \text{ and } a^{k\ell}_i =A^1_{k\ell}a_i \text{ for }i\in I_k; \quad \text{ for } k=1,2, \ell=1,2.\]
This allocates the proportions of $p$ and $a$ to each group. We then apply this process again, but now we partition each $I_k$ and $J_\ell$ further into two groups. If any partition only contains one element, we do not partition this further. We repeat this process until all partitions contain only one element, resulting in a final proportion for each asset class to each portfolio. 

This gives an approximate solution to the general case, although the choices of partitions give different solutions. Care must be taken if there are zeros in $M$, and this must be adjusted for appropriately, by, for instance, picking partitions that isolate the zeros first. In general, this approach might be most useful if there are certain asset classes that have high correlation, for example two categories of shares. Then picking partitions that groups these asset classes together can increase the effectiveness of the approximation, although we do not prove this here. In general, any rebalancing solutions can be integrated in this way to create new rebalancing processes, and at each time step a different process can be used if required.

\subsection{General case} \label{subsec:generalcase}

In this section we prove that for any given feasible rebalancing problem $(M,a,p)$ there exists a natural solution $A$ and $x',p'$ that solve the asset class and portfolio equations in \Cref{defn:market-invariantrebal}. This means that the market-invariant rebalancing process is well defined. From this, we describe an algorithm that can be implemented efficiently to find a unique solution for any rebalancing problem. 

Here we will often replace the index $i=1,\ldots,m$ with $s$ and the index $j=1,\ldots, n$ with $t$ when there are multiple summations present.


An intuitive way to interpret this next theorem is as follows. Say we have a feasible rebalancing problem $(M,a,p)$ and we want to distribute the asset overweights/underweights proportionally. 

We suggested in \S\ref{subsec:hybrid} that the proportional overweight/underweight 
\[ q_i = \frac{a_i}{\sum_{j}M_{ij}p_j} ,\]  
should be used to distribute assets to each portfolio, where portfolio $i$ should receive 
\[A^{\$}_{ij} = M_{ij}p_jq_i = \frac{M_{ij}p_ja_i}{\sum_{t}M_{it}p_t}.\] 
However, this does not ensure that the row sums of $A^{\$}$ are equal to $a$, although the column sums are equal to $p$. The following theorem is a way to use this intuition but to ensure that $Ap=a$, and it turns out that this gives a non-negative solution to \Cref{defn:market-invariantrebal}.

\begin{theorem} \label{thm: market-invariantSequence}
Construct two sequences of matrices $M^{(k,p)}$, $M^{(k,a)}$, such that 
\[ M^{(1,p)}_{ij}= M_{ij}p_j, \quad M^{(k+1,p)}_{ij}= \frac{M_{ij}^{(k,p)} p_j a_i}{\left(\sum_t M^{(k,p)}_{it}\right) \left(\sum_s\frac{M_{sj}^{(k,p)} a_i}{\sum_tM_{st}^{(k,p)}}\right)}\] 
and 
\[ M^{(1,a)}_{ij}= \frac{M_{ij}a_ip_j}{\sum_t M_{it}p_t}, \quad M^{(k+1,a)}_{ij}= \frac{M_{ij}^{(k,a)} a_i p_j}{\left(\sum_s M^{(k,a)}_{sj}\right) \left(\sum_t\frac{M_{it}^{(k,a)} p_t}{\sum_sM_{st}^{(k,a)}}\right)}\] for $k=1,2,\ldots$. We have the relationship \[ M_{ij}^{(k+1,p)} = \frac{M_{ij}^{(k,a)}p_j}{\sum_s M_{sj}^{(k,a)}}, \quad M_{ij}^{(k,a)} = \frac{M_{ij}^{(k,p)}a_i}{\sum_t M_{it}^{(k,p)}}.\] At each step we have $\sum_iM_{ij}^{(k,p)}=p_j$ and $\sum_j M_{ij}^{(k,a)}=a_i$, and as $\sum_ia_i=\sum_jp_j$ we have that \[\sum_{i,j}M_{ij}^{(k,p)}=\sum_{i,j}M_{ij}^{(k,a)} = \sum_ia_i=\sum_jp_j.\]

The sequences $M^{(k,p)}$, $M^{(k,a)}$ converge to $M^p$ and $M^a$ respectively, and we have $A^{\$}=M^p=M^a$. We define $x'_i =\frac{A^{\$}_{i1}}{M_{i1}}$ for all $i=1,\ldots, m$, and $p'_j= \frac{A^{\$}_{ij}}{M_{ij}x_i}$ for $j=1,\ldots,n$ which implies $p'_1=1$, and $A_{ij}=\frac{A^{\$}_{ ij}}{\sum_sA^{\$}_{ sj}}$. Then $A$ is a solution to the rebalancing problem $(M,a,p)$ and $x',p'$ satisfy the equations in \Cref{defn:market-invariantrebal}. 
\end{theorem}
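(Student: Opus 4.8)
The plan is to treat the two recursions as the classical alternating row/column rescaling (the RAS / iterative proportional fitting iteration) and to separate the essentially bookkeeping claims from the single genuine analytic difficulty, which is convergence. First I would verify the two displayed relationships
\[ M_{ij}^{(k+1,p)} = \frac{M_{ij}^{(k,a)}p_j}{\sum_s M_{sj}^{(k,a)}}, \qquad M_{ij}^{(k,a)} = \frac{M_{ij}^{(k,p)}a_i}{\sum_t M_{it}^{(k,p)}}, \]
by direct substitution into the two explicit recursions; this simultaneously shows that passing from $M^{(k,p)}$ to $M^{(k,a)}$ is exactly rescaling each row $i$ to sum to $a_i$, and passing from $M^{(k,a)}$ to $M^{(k+1,p)}$ is exactly rescaling each column $j$ to sum to $p_j$. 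The marginal identities $\sum_i M^{(k,p)}_{ij}=p_j$ and $\sum_j M^{(k,a)}_{ij}=a_i$ are then immediate, and summing either over all entries and using $\sum_i a_i=\sum_j p_j$ gives the total-mass identity. Since each step multiplies rows or columns by strictly positive factors, every iterate is non-negative, shares the zero pattern of $M$, and (having fixed column sums $p_j$, respectively row sums $a_i$) has all entries bounded by $\sum_i a_i$; hence both sequences lie in a compact set, and each iterate is a two-sided diagonal scaling $M^{(k,p)}_{ij}=u^{(k)}_i M_{ij} v^{(k)}_j$ of $M$.

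Second --- and this is where the real work lies --- I would prove convergence. My preferred route is the alternating-minimisation / information-projection argument: fix any feasible matrix $B^\star\ge 0$ with $\operatorname{supp}(B^\star)\subseteq\operatorname{supp}(M)$, row sums $a_i$ and column sums $p_j$ (such a $B^\star$ exists precisely because $(M,a,p)$ is feasible, i.e. satisfies the conditions of \Cref{lem:feasibilityRebal}), and track the relative entropy $D(B^\star\Vert M^{(k,\cdot)})=\sum_{ij}B^\star_{ij}\log\frac{B^\star_{ij}}{M^{(k,\cdot)}_{ij}}$. Each half-step is the information projection of the current iterate onto the affine set of matrices with the prescribed row (resp. column) marginals, and a Pythagorean identity for $D$ shows $D(B^\star\Vert M^{(k,\cdot)})$ is non-increasing with consecutive differences that telescope; boundedness below then forces the per-step decrements, and hence the marginal defects $\big|\sum_j M^{(k,p)}_{ij}-a_i\big|$, to tend to $0$. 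Combined with compactness, every subsequential limit of $M^{(k,p)}$ is a non-negative matrix with support in $\operatorname{supp}(M)$, marginals $a$ and $p$, minimising $D(\,\cdot\,\Vert M^{(1,p)})$; strict convexity of $D$ makes this minimiser unique, so all subsequential limits coincide and the full sequence converges. Writing $A^{\$}$ for the limit and noting that the row-rescaling factors $a_i/\sum_t M^{(k,p)}_{it}\to 1$, we also get $M^{(k,a)}-M^{(k,p)}\to 0$, so $M^p=M^a=A^{\$}$ with row sums $a$ and column sums $p$. The delicate point, and the main obstacle, is controlling the entries where $M_{ij}=0$ and ensuring the scaling factors do not degenerate: feasibility is exactly what guarantees $B^\star$ exists and keeps each projection well defined, so that no row or column mass is lost in the limit. (For strictly positive $M$ one can alternatively invoke Birkhoff's contraction theorem to obtain geometric convergence in Hilbert's projective metric, which is cleaner but less self-contained.)

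Finally I would close the loop, which is routine once convergence is in hand. Being a common limit, $A^{\$}$ is fixed by both rescalings, so it retains the product form $A^{\$}_{ij}=x'_i M_{ij} p'_j$; taking $x'_i=A^{\$}_{i1}/M_{i1}$ and $p'_j=A^{\$}_{ij}/(M_{ij}x'_i)$ as in the statement recovers positive vectors with $p'_1=1$, the value being independent of the chosen $i$ exactly because of the product form. Since the limit has row sums $a_i$ and column sums $p_j$, these vectors satisfy $x'\odot Mp'=a$ and $p'\odot M^{t}x'=p$ --- that is, the asset-class equations \Cref{eqn:assetclassequations} and the portfolio equations \Cref{eqn:portfolioequations} of \Cref{defn:market-invariantrebal} --- and setting $A_{ij}=A^{\$}_{ij}/\sum_s A^{\$}_{sj}$ yields column sums $1$ together with $Ap=a$, so $A$ solves $(M,a,p)$. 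The convergence step of the second paragraph is the crux of the theorem; everything else is algebraic verification.
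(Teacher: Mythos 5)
Your proposal is correct in substance and handles the bookkeeping identities the same way, but the crux of the theorem --- convergence --- is argued by a genuinely different route from the paper's. The paper's proof is elementary real analysis: it shows that $\min_i \frac{\sum_j M^{(k,p)}_{ij}}{a_i}$ is nondecreasing and $\max_i \frac{\sum_j M^{(k,p)}_{ij}}{a_i}$ is nonincreasing in $k$, invokes the Monotone Convergence Theorem and a squeeze argument to force both to a common limit, and identifies that limit as $1$ by summing over $i$ and using conservation of total mass; no reference matrix and no variational characterisation of the limit appear. Your argument is the classical Csisz\'ar alternating I-projection proof: each half-step is an information projection, the Pythagorean identity makes $D(B^\star\Vert M^{(k,\cdot)})$ monotone with telescoping decrements equal to the relative entropy between current and target marginals, and compactness plus strict convexity identify the limit as the unique KL-minimiser. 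What your route buys is a variational identification of the limit --- it essentially proves \Cref{thm: bachuniqueness} for free and connects directly to the optimisation program \Cref{eqn:optimisation} that the paper only introduces afterwards --- and it makes explicit where feasibility (\Cref{lem:feasibilityRebal}) enters, namely in supplying a $B^\star$ vanishing wherever $M$ does so that the entropy is finite. What the paper's route buys is self-containedness: only monotone sequences and the squeeze theorem are needed. One shared caveat: your closing step asserts that the limit retains the exact product form $A^{\$}_{ij}=x'_iM_{ij}p'_j$ with finite positive $x',p'$; this is immediate for strictly positive $M$, but when $M$ has zeros the diagonal factors $u^{(k)},v^{(k)}$ need not converge even though the matrix iterates do, so the limit is biproportional to $M$ only in a limiting sense. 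The paper's induction on $x^k,y^k$ asserts their convergence without proof and so glosses over the same point; this is therefore not a defect relative to the paper's own standard of rigour, but it is the one place where both arguments would need tightening.
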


\begin{corollary} \label{cor: market-invariantsequence}
The market-invariant rebalancing process from \Cref{defn:market-invariantrebal} applied to a rebalancing problem $(M,a,p)$ has a natural non-negative solution $(A,x',p')$ with $p_1'=1$. 
\end{corollary}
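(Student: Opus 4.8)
The plan is to recognise the two sequences as the column- and row-rescaling half-steps of the RAS / iterative proportional fitting iteration and to prove convergence by a variational argument. First I would dispatch the bookkeeping. The two displayed relations record only that $M^{(k,a)}$ is obtained from $M^{(k,p)}$ by scaling row $i$ by $a_i/\sum_t M^{(k,p)}_{it}$, and $M^{(k+1,p)}$ from $M^{(k,a)}$ by scaling column $j$ by $p_j/\sum_s M^{(k,a)}_{sj}$; substituting one into the other reproduces the explicit recursions (with the inner denominator sum over $s$ carrying weight $a_s$, i.e. the index on $a$ there should be summed). A one-line induction then gives $\sum_i M^{(k,p)}_{ij}=p_j$ and $\sum_j M^{(k,a)}_{ij}=a_i$ for all $k$, the base case being $\sum_i M_{ij}p_j=p_j$ since $M$ has unit column sums; the total-mass identity follows by summing and using $\sum_i a_i=\sum_j p_j$. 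Wherever $M_{ij}=0$ every iterate is $0$, so the support stays frozen at that of $M$.

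The substance is convergence. I would view one full step as alternating $I$-projection onto two marginal families: with $D(B\|M)=\sum_{ij}\big(B_{ij}\log(B_{ij}/M_{ij})-B_{ij}+M_{ij}\big)$ (summed over the support of $M$), the row-scaling $B_{ij}\mapsto B_{ij}\,a_i/\sum_t B_{it}$ is exactly the $I$-projection of $B$ onto $\mathcal R=\{B\ge 0:\sum_j B_{ij}=a_i\}$, and column-scaling is the projection onto $\mathcal C=\{B\ge 0:\sum_i B_{ij}=p_j\}$. Equivalently, each half-step is exact block coordinate ascent on the smooth strictly concave dual $F(u,v)=\sum_i a_iu_i+\sum_j p_jv_j-\sum_{ij}M_{ij}e^{u_i+v_j}$, whose stationary points are precisely the biproportional matrices $B_{ij}=M_{ij}e^{u_i+v_j}$ meeting both marginals. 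Crucially, feasibility of $(M,a,p)$, i.e. conditions (1)--(2) of \Cref{lem:feasibilityRebal}, is exactly what makes $\mathcal R\cap\mathcal C$ (with support in $M$) non-empty, so $D(\cdot\|M)$ is bounded below and $F$ is coercive and attains a unique maximiser.

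Granting this, fix any feasible $B^{*}\in\mathcal R\cap\mathcal C$ with support in $M$. The Pythagorean identity for $I$-projections onto affine families gives $D(B^{*}\|M^{(k,p)})=D(B^{*}\|M^{(k,a)})+D(M^{(k,a)}\|M^{(k,p)})\ge D(B^{*}\|M^{(k,a)})$ and likewise for the next half-step, so $D(B^{*}\|M^{(k,p)})$ is non-increasing and bounded below, hence convergent; telescoping shows $\sum_k\big(D(M^{(k,a)}\|M^{(k,p)})+D(M^{(k+1,p)}\|M^{(k,a)})\big)<\infty$, whence $M^{(k,a)}-M^{(k,p)}\to 0$ and $M^{(k+1,p)}-M^{(k,a)}\to 0$. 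Thus the row sums of $M^{(k,p)}$ tend to $a$, the column sums of $M^{(k,a)}$ tend to $p$, and the two sequences share every limit point. All entries lie in $[0,\sum_i a_i]$, so a subsequence converges; any limit lies in $\mathcal R\cap\mathcal C$ and retains the product form $M_{ij}e^{u_i+v_j}$, and strict convexity of $D(\cdot\|M)$ identifies it as the unique such point, forcing the whole sequence to converge to a common $A^{\$}=M^p=M^a$. Setting $x'_i=A^{\$}_{i1}/M_{i1}$ and $p'_j=A^{\$}_{ij}/(M_{ij}x'_i)$ recovers $A^{\$}_{ij}=x'_iM_{ij}p'_j$, whose marginals $\sum_i A^{\$}_{ij}=p_j$ and $\sum_j A^{\$}_{ij}=a_i$ are exactly the portfolio and asset-class equations of \Cref{defn:market-invariantrebal}; normalising columns yields $A$ with $Ap=a$, which is \Cref{cor: market-invariantsequence} with the normalisation $p'_1=1$.

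The main obstacle is the convergence step, and two points inside it deserve care. The first is coercivity/non-emptiness of $\mathcal R\cap\mathcal C$ when $M$ has zeros: this is precisely where the combinatorial conditions (1)--(2) of \Cref{lem:feasibilityRebal} are needed, to exclude a support deficient enough that no matrix with that zero pattern carries the prescribed marginals. The second is upgrading convergence of the objective to convergence of the matrices, which requires the uniqueness/strict-convexity argument rather than bare compactness. As an alternative for strictly positive $M$, one could instead show the map on the log-scalings contracts in Hilbert's projective metric (Birkhoff), giving geometric convergence directly; I prefer the $I$-divergence route here because it accommodates the zeros of $M$ permitted by the feasibility hypothesis and ties cleanly to the biproportional form used in \Cref{defn:market-invariantrebal} and \Cref{def:marketinvar}.
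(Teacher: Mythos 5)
Your proposal is correct, but it proves convergence by a genuinely different route from the paper. The paper's own argument is elementary and self-contained: it tracks the quantities $\min_i\frac{\sum_jM^{(k,p)}_{ij}}{a_i}$ and $\max_i\frac{\sum_jM^{(k,p)}_{ij}}{a_i}$, shows the former is non-decreasing and the latter non-increasing in $k$, invokes the Monotone Convergence and Squeeze theorems to force both to a common limit, identifies that limit as $1$ by summing over $i$, and then exhibits the scaling vectors $x^k,y^k$ explicitly by induction so that every iterate has the biproportional form. Your argument instead runs the classical alternating $I$-projection scheme: each half-step is the generalized-KL projection onto one marginal family, the Pythagorean identity (which, as you implicitly use, holds with equality here because $\sum_{ij}M^{(k,p)}_{ij}=\sum_ia_i=\sum_jp_j$ at every step) makes $D(B^*\Vert\cdot)$ a Lyapunov function, and telescoping forces the successive gaps to zero. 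What your route buys is a direct bridge to the optimisation interpretation the paper only introduces later in \S\ref{sec: MIoptimisation} --- the objective there is exactly your $D(A^{\$}\Vert M)$ up to affine terms --- together with uniqueness of the limit from strict convexity, which the paper instead imports from \citet{Bacharach1965} as \Cref{thm: bachuniqueness}; it also yields summability of the half-step divergences, a quantitative statement the paper's squeeze argument does not give. What the paper's route buys is that it needs nothing beyond first-year real analysis and produces the scaling vectors $x',p'$ constructively at every finite stage. Two small cautions on your version: your observation that the inner denominator in the displayed recursion should carry $a_s$ rather than $a_i$ is right (the paper's own proof uses $a_s$); and your claim that a limit point ``retains the product form $M_{ij}e^{u_i+v_j}$'' needs the feasibility conditions of \Cref{lem:feasibilityRebal} to hold with strict inequality on the relevant subsets --- when they hold with equality the scalings $u_i,v_j$ can diverge and the limit acquires extra zeros --- but this degeneracy is equally unaddressed in the paper's proof, so it is not a gap relative to the standard you are being compared against.
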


\begin{proof}
Proof of convergence of $M^{(k,p)}$ is as follows. Convergence of $M^{(k,a)}$ is analogous. 

We show that $\sum_jM_{ij}^{(k,p)}\rightarrow a_i$ for each $i$. We see that 
\begin{align}\nonumber
    \frac{\sum_jM_{ij}^{(k+1,p)}}{a_i} &= \frac{1}{\sum_j M_{ij}^{(k,p)}} \sum_j\frac{M_{ij}^{(k,p)}p_j}{\sum_s\frac{M_{sj}^{(k,p)}a_s}{\sum_tM_{st}^{(k,p)}}}\\\nonumber
    & \ge \frac{1}{\sum_j M_{ij}^{(k,p)}} \left(\sum_{j\ne j_i'}\frac{M_{ij}^{(k,p)}p_j}{\sum_sM_{sj}^{(k,p)}\max_s\frac{a_s}{\sum_tM_{st}^{(k,p)}}} + \frac{M_{ij_i'}^{(k,p)}p_{j_i'}}{\sum_sM_{sj_i'}^{(k,p)}\frac{a_s}{\sum_tM_{st}^{(k,p)}}}\right)\\
    &=\frac{1}{\sum_j M_{ij}^{(k,p)}} \left(\sum_{j\ne j_i'}\frac{M_{ij}^{(k,p)}}{\max_s\frac{a_s}{\sum_tM_{st}^{(k,p)}}} + \frac{M_{ij_i'}^{(k,p)}p_{j_i'}}{\sum_sM_{sj_i'}^{(k,p)}\frac{a_s}{\sum_tM_{st}^{(k,p)}}}\right). \label{eqn:proofeqt1}
\end{align}
Here we define $j_i'$ to be any $j$ for this $i$ such that $M_{ij}^{(k,p)}$ is non-zero, and we use that $\sum_iM^{(k,p)}_{ij} = p_j$ to go from the second line to the third.

Now consider $i'$ such that $\frac{a_i}{\sum_j M_{ij}^{(k,p)}}$ attains its minimum. Then we have 
\begin{align*}
    \sum_iM_{ij}^{(k,p)}\frac{a_i}{\sum_tM_{it}^{(k,p)}} &\le \sum_{i \ne i'} M_{ij}^{(k,p)} \max_s\frac{a_s}{\sum_t M_{st}^{(k,p)}} + M_{i',j}^{(k,p)} \min_s\frac{a_{s}}{\sum_t M_{st}^{(k,p)}}\\
    &=\sum_{i} M_{ij}^{(k,p)} \max_s\frac{a_s}{\sum_t M_{st}^{(k,p)}} - M_{i',j}^{(k,p)} (\max_s- \min_s)\frac{a_{s}}{\sum_t M_{st}^{(k,p)}}\\
    &=p_j \max_s\frac{a_s}{\sum_t M_{st}^{(k,p)}} - M_{i',j}^{(k,p)} (\max_s- \min_s)\frac{a_{s}}{\sum_t M_{st}^{(k,p)}}
\end{align*}
Substituting this into \Cref{eqn:proofeqt1} we have 
\begin{align}  
&\frac{\sum_jM_{ij}^{(k+1,p)}}{a_i}  \\\nonumber
&\ge \frac{1}{\sum_j M_{ij}^{(k,p)}} \left(\sum_{j\ne j_i'}\frac{M_{ij}^{(k,p)}}{\max_s\frac{a_s}{\sum_tM_{st}^{(k,p)}}} + \frac{M_{ij_i'}^{(k,p)}p_{j_i'}}{p_{j_i'} \max_s\frac{a_s}{\sum_t M_{st}^{(k,p)}} - M_{i',j_i'}^{(k,p)} (\max_s- \min_s)\frac{a_{s}}{\sum_t M_{st}^{(k,p)}}}\right)\\\nonumber
&=\frac{1}{\sum_j M_{ij}^{(k,p)}} \left(\left(\sum_jM_{ij}^{(k,p)}-M_{ij_i'}^{(k,p)}\right) \min_s\frac{\sum_tM_{st}^{(k,p)}}{a_s} + \frac{M_{ij_i'}^{(k,p)}p_{j_i'}\min_s\frac{\sum_t M_{st}^{(k,p)}}{a_s}}{p_{j_i'}  - M_{i',j_i'}^{(k,p)} \left(1- \frac{\min_s}{\max_s}\right)\frac{a_{s}}{\sum_t M_{st}^{(k,p)}}}\right)\\\nonumber
&=\frac{\min_s\frac{\sum_tM_{st}^{(k,p)}}{a_s}}{\sum_j M_{ij}^{(k,p)}} \left(\sum_jM_{ij}^{(k,p)}+M_{ij_i'}^{(k,p)}\left( \frac{p_{j_i'}}{p_{j_i'}  - M_{i',j_i'}^{(k,p)} \left(1- \frac{\min_s}{\max_s}\right)\frac{a_{s}}{\sum_t M_{st}^{(k,p)}}} -1\right) \right) \\ \label{eq:proofeqpentult}
&=\min_s\frac{\sum_tM_{st}^{(k,p)}}{a_s} \left(1+\frac{M_{ij_i'}^{(k,p)}}{\sum_j M_{ij}^{(k,p)}}\left( \frac{1}{1  - \frac{M_{i',j_i'}^{(k,p)}}{p_{j_i'}} \left(1- \frac{\min_s}{\max_s}\right)\frac{a_{s}}{\sum_t M_{st}^{(k,p)}}} -1\right) \right)\\
& \ge \min_s\frac{\sum_tM_{st}^{(k,p)}}{a_s}. \label{eq:proofeqtend}
\end{align}
As this is true for all $i$ this means that 
\[ \min_i\frac{\sum_jM_{ij}^{(k+1,p)}}{a_i} \le \min_i\frac{\sum_jM_{ij}^{(k,p)}}{a_i}\] and so for each $i$ $ \min_i\frac{\sum_jM_{ij}^{(k,p)}}{a_i}$ is an increasing sequence in $k$.

Similarly, we can swap all of the maximums and minimums to get a bound 
\begin{align*} \frac{\sum_jM_{ij}^{(k+1,p)}}{a_i} 
& \le \max_i\frac{\sum_jM_{ij}^{(k,p)}}{a_i}. 
\end{align*}
Then for all $j$,  $\max_i\frac{\sum_jM_{ij}^{(k,p)}}{a_i}$ is a decreasing sequence in $k$.

Iterating over $k$ we see by induction that 
\begin{equation}\label{eq:proofeqtsqueeze} \min_i\frac{\sum_jM_{ij}^{(k,p)}}{a_i}\le \min_i\frac{\sum_jM_{ij}^{(k+r,p)}}{a_i}\le \frac{\sum_jM_{ij}^{(k+r,p)}}{a_i} \le \max_i\frac{\sum_jM_{ij}^{(k+r,p)}}{a_i}\le \max_i\frac{\sum_jM_{ij}^{(k,p)}}{a_i} \end{equation}
for all $r=0,1, \ldots$. This means that $\max_i\frac{\sum_jM_{ij}^{(k+r,p)}}{a_i}$ is a decreasing sequence in $r$ bounded below and $\min_i\frac{\sum_jM_{ij}^{(k+r,p)}}{a_i}$ is an increasing sequence in $r$ that is bounded above, so both converge by the Monotone Convergence Theorem (see \cite[Chpt.1~\S 1~Cor~1.6]{Knapp2016} or any standard real analysis text book). 

As sequences to converge, by the Squeeze theorem for sequences (see for example \cite[Thm.~3.19]{Rudin1976} or \cite[Chpt.~1~\S~1 Prop.~1.7]{Knapp2016}) we must have that \Cref{eq:proofeqpentult} converges to \Cref{eq:proofeqtend}. This implies that
\begin{align*}
    1+&\frac{M_{ij_i'}^{(k,p)}}{\sum_j M_{ij}^{(k,p)}}\left( \frac{1}{1  - \frac{M_{i',j_i'}^{(k,p)}}{p_{j_i'}} \left(1- \frac{\min_s}{\max_s}\right)\frac{a_{s}}{\sum_t M_{st}^{(k,p)}}} -1\right) \rightarrow 1\\
    & \Leftrightarrow\frac{M_{ij_i'}^{(k,p)}}{\sum_j M_{ij}^{(k,p)}}\left( \frac{1}{1  - \frac{M_{i',j_i'}^{(k,p)}}{p_{j_i'}} \left(1- \frac{\min_s}{\max_s}\right)\frac{a_{s}}{\sum_t M_{st}^{(k,p)}}} -1\right) \rightarrow 0\\ 
    & \Leftrightarrow \frac{1}{1  - \frac{M_{i',j_i'}^{(k,p)}}{p_{j_i'}} \left(1- \frac{\min_s}{\max_s}\right)\frac{a_{s}}{\sum_t M_{st}^{(k,p)}}} -1 \rightarrow 0\\ 
    & \Leftrightarrow \frac{1}{1  - \frac{M_{i',j_i'}^{(k,p)}}{p_{j_i'}} \left(1- \frac{\min_s}{\max_s}\right)\frac{a_{s}}{\sum_t M_{st}^{(k,p)}}} \rightarrow 1\\
    & \Leftrightarrow 1  - \frac{M_{i',j_i'}^{(k,p)}}{p_{j_i'}} \left(1- \frac{\min_s}{\max_s}\right)\frac{a_{s}}{\sum_t M_{st}^{(k,p)}} \rightarrow 1\\ 
    & \Leftrightarrow \frac{M_{i',j_i'}^{(k,p)}}{p_{j_i'}} \left(1- \frac{\min_s}{\max_s}\right)\frac{a_{s}}{\sum_t M_{st}^{(k,p)}} \rightarrow 0\\
     & \Leftrightarrow \left(1- \frac{\min_s}{\max_s}\right)\frac{a_{s}}{\sum_t M_{st}^{(k,p)}} \rightarrow 0\\
     & \Leftrightarrow \left(\frac{\min_s}{\max_s}\right)\frac{a_{s}}{\sum_t M_{st}^{(k,p)}} \rightarrow 1.
\end{align*}
From this we can conclude that $\max_i\frac{\sum_jM_{ij}^{(k,p)}}{a_i}$ and $\min_i\frac{\sum_jM_{ij}^{(k,p)}}{a_i}$ must both converge to the same value, and by \Cref{eq:proofeqtsqueeze} and the Squeeze theorem for sequences again, this means $\frac{\sum_jM_{ij}^{(k+t,p)}}{a_i}$ also converges and converges to the same limit. 

Due the symmetry between $M^{(k,a)}$ and $M^{(k,p)}$ we have the corresponding result that $\max_j\frac{\sum_iM_{ij}^{(k,a)}}{p_j}$, $\min_j\frac{\sum_iM_{ij}^{(k,a)}}{p_j}$ and $\frac{\sum_iM_{ij}^{(k+t,a)}}{p_j}$ all converge to the same limit. 

Using this, we have 
\[  T \leftarrow \min_i\frac{\sum_jM_{ij}^{(k,p)}}{a_i}\le \frac{\sum_jM_{ij}^{(k,p)}}{a_i} \le \max_i\frac{\sum_jM_{ij}^{(k,p)}}{a_i} \rightarrow T \] for $T$ some constant which must not depend on $i$. 
Then 
\[ \sum_jM_{ij}^{(k,p)} \rightarrow a_iT.\] Taking the sum over $i$ we have 
\[ \sum_ia_i= \sum_i\sum_jM_{ij}^{(k,p)} \rightarrow \sum_ia_iT,\] and as there is no dependence on $k$ in the right hand or left hand side, we have that $T=1$. So indeed 
$\max_i\frac{\sum_jM_{ij}^{(k,p)}}{a_i}$, $\min_i\frac{\sum_jM_{ij}^{(k,p)}}{a_i}$, and $\frac{\sum_jM_{ij}^{(k+t,p)}}{a_i} $ converge to 1, and similar for the corresponding $(k,a)$ terms. Then we must have 

\[\sum_jM_{ij}^{(k,p)} \rightarrow a_i.\] Similarly \[\sum_iM_{ij}^{(k,a)} \rightarrow p_j.\] 
This in turn means that $M_{ij}^{(k,p)}$ and $M_{ij}^{(k,a)}$  must also converge, as they are defined by multiplying by $\frac{\sum_jM_{ij}^{p}}{a_i}$ and $ \frac{\sum_iM_{ij}^{a}}{p_j}$. 

Note that $\sum_i M^{(k,a)}_{ij} = p_j$ for all $j$ and $ \sum_j M^{(k,a)}_{ij} = a_i$ for all $i$. Then we have the difference 
\begin{align*}
    M_{ij}^{(k,a)} - M_{ij}^{(k,p)} &= \frac{M_{ij}^{(k,p)}a_i}{\sum_tM_{it}^{(k,p)}} - M_{ij}^{(k,p)}\\
    & = M_{ij}^{(k,p)}\left (\frac{a_i}{\sum_tM_{it}^{(k,p)}}-1 \right)\\
    &\rightarrow 0 \text{ as } k\rightarrow \infty.
\end{align*}
This means the limits are $M^a=M^p=A^{\$}$, and we know that $\sum_j M^{p}_{ij} = \sum_j M^{a}_{ij}=a_i$, and  $\sum_i M^{p}_{ij} =\sum_i M^{a}_{ij}= p_j$.

The proof that the limit solves the equations in \Cref{defn:market-invariantrebal} is by induction to find $x$ and $y$ as follows.

We will show for each $k$ there exist non-negative $x^k\in \R^n$, $y^k\in \R^m$  such that $M^{(k,p)}_{ij} = x_i^k M_{ij}y_j^k$ for all $i,j$. The initial case is $M^{(1,p)}_{ij}= M_{ij}p_j$, so this is true with $x_i^1=1, y_j^1=p_j$. 

Assume it is true for $M^{(k,p)}$ for some $x^k,y^k$. Then 
\begin{align*}
M^{(k+1,p)}_{ij}= \frac{M_{ij}^{(k,p)} p_j a_i}{\left(\sum_t M^{(k,p)}_{it}\right) \left(\sum_s\frac{M_{sj}^{(k,p)} a_s}{\sum_tM_{st}^{(k,p)}}\right)}\\
= \frac{x^k_ia_i}{\left(\sum_t M^{(k,p)}_{it}\right)} M_{ij} \frac{ p^k_j}{ \left(\sum_s\frac{M_{sj}^{(k,p)} a_s}{\sum_tM_{st}^{(k,p)}}\right)}
\end{align*}
as required, with 
\[ x_i^{k+1} = \frac{x_ia_i}{\left(\sum_t M^{(k,p)}_{it}\right)}, \quad y_j^{k+1} = \frac{ p_j}{ \left(\sum_s\frac{M_{sj}^{(k,p)} a_s}{\sum_tM_{st}^{(k,p)}}\right)}.\] 

A similar proof shows that there are $\hat{x}^k$ and $\hat{y}^k$ for $M^{(k,a)}$, with $\hat{x}^k_i = \frac{x^k_ia_i}{\sum_j M_{ij}^{(k,p)}}$, $x^{k+1}_i = \hat{x}^k_i$ and $\hat{y}^k_j = y^k_j$, $y^{k+1}_j= \frac{\hat{y}^k_jp_j}{\sum_iM_{ij}^{(k,a)}}$. We see that at each iteration $k$ then $x^k, y^k$ solve the portfolio equations and $\hat{x}^k$ and $\hat{y}^k$ solve the asset class equations. As $k\rightarrow \infty$ then we have $x^k$ and $\hat{x}^k$ converge to the same value $x$, and $y^k$ and $\hat{y}^k$ converge to the same value $y$. This convergence implies we solve both equations simultaneously, giving the solution to the equations in \Cref{defn:market-invariantrebal} as required. 

Finally we define $x'$ such that $x'_i =\frac{A^{\$}_{i1}}{M_{i1}}$ for all $i=1,\ldots, m$, and $p'$ such that $p'_j= \frac{A^{\$}_{ij}}{M_{ij}x_i}$ for $j=1,\ldots,n$ which implies $p'_1=1$. Then we have allocation $A$ with $A_{ij}=\frac{A^{\$}_{ ij}}{\sum_{i=1}^mA^{\$}_{ ij}}$. These $A,p',x'$ solve the rebalancing problem $(M,a,p)$ and satisfy the equations in \Cref{defn:market-invariantrebal}, so they are a solution to the market-invariant rebalancing process.
\end{proof}

This theorem shows the existence of solutions to \Cref{defn:market-invariantrebal} by constructing such a non-negative solution. We can then describe a rebalancing process by using this.

We can use this theorem to construct \Cref{alg:EqtRebal} that finds this natural solution and we call this the \emph{market-invariant rebalancing algorithm}. This algorithm directly calculates a finite number of elements in the sequence $M^{(1,p)}$, $M^{(1,a)}$, $M^{(2,p)}$, $M^{(2,a)}$ ,$\ldots,$ as described in \Cref{thm: market-invariantSequence}. As it only constructs a finite number of elements in the sequence, then it is unlikely to have fully converged upon termination. At this point, it takes the last element calculated in the sequence $M^{(r,p)}$ which has columns summing to $p$ and determines any difference between its row sums and $a$ (note that the sum of these differences is $0$). It then adds these into $M^{(r,p)}$ in a way proportion to $p$. This then enforces that the columns still sum to $p$ and now the rows also sum to $a$. Note that the differences could be added back to $M^{(r,p)}$ in many different ways, for example they could be all added to the largest portfolio, just like the banker process.

\begin{algorithm}
\caption{Market-invariant rebalancing algorithm} \label{alg:EqtRebal}
\begin{algorithmic} 
\Require A rebalancing problem $(M,a,p)$ with non-negative $n\times m$ matrix $M$ with columns summing to 1, non-negative column vectors $a\in \R^n$ and $p\in \R^m$. Number of iterations $r$ or accuracy level $q$ for the monetary amount (e.g. accurate to $q=0.01$). We will use $e_k=(1,\ldots,1)$ as the row vector all ones of dimension $k$ to be used in this algorithm.

\State \!\!\!\!\!\!{\bf Output:} Non-negative $n\times m$ matrix $A$ with columns summing to 1 such that $Ap=a$ up to some accuracy. Here $A$ solves the rebalancing problem $(M,a,p)$ and is an approximate solution to the market-invariant rebalancing process.

\Ensure If any $p$ or $a$ are zero, set the allocation $A^{\$}$ to be zero for the corresponding rows or columns, and let $A=M$ for those rows and columns. Reduce the problem $(M,a,p)$ to no longer contain these rows or columns. Then set  \[M^{(1,p)}=M\odot e_m^Tp^T; \quad \text{ and } \quad  M^{(1,a)}=M^{(1,p)}\odot [a \oslash (M^{(1,p)}e_n^T)]e_n.\]

\State{\bf Step 1}

\For{$k=1,2,\ldots,$}

$M^{(k+1,p)}=M^{(k,a)}\odot [p \oslash ((M^{(k,a)})^Te_m^T)]e_m$

$M^{(k+1,a)}=M^{(k+1,p)}\odot [a \oslash (M^{(k+1,p)}e_n^T)]e_n$

\If{$k+1=r$ \text{ OR }  $\max_{ij}(M^{(k+1,a)} - M^{(k+1,p)})_{ij}\le q$ }

\State {\bf end for, go to Step 2}
\EndIf
\EndFor

\State{\bf Step 2}

Set $r$ to be the last value of $k$ used.

Calculate the vector $d = a-M^{(r+1),p}e_n^T$ as a column vector. 

Calculate the proportions $q=p/\sum(p)$ as a row vector. 

Set $A^{\$} = M^{(r+1),p}+dq$.

Set $A = A^{\$}\oslash ({A^{\$}}^Te_m^T)e_n$.

\State{\bf Terminate algorithm}

\end{algorithmic}
\end{algorithm}
We also give a Matlab code example of this algorithm, available in \Cref{app:code}. Note that this algorithm converges extremely quickly, with $r=3$ iterations already giving the solution to 3 decimal places.

It turns out that this algorithm has been discovered previously and is known by many names including the \emph{RAS algorithm}, \emph{raking}, the \emph{biproportional fitting algorithm}, and the \emph{iterative proportional fitting procedure}, see \cite{Stone1942, Lomax2015,lahr2004,Bacharach1965,Bacharach1971,Kruithof1937,Deming1940, Bregman1967} for further details. This algorithm can be applied to more general cases, for example, there is no requirement that the columns of $M$ sum to 1, but instead requiring that no row or column entirely consists of zeros. 

Note that Bezout's theorem tells that if the solution space to \Cref{defn:market-invariantrebal} has dimension 0, then there are at most $2^{(n-1)(m-1)}$ solutions for the reduced solution set (in $\mathbb{C}^{m+n-1}$, as $x_2'=1$). We have shown that that for $m=n=2$ there are precisely $2$ real solutions with $x_2'=1$, and for $(m,n)=(2,3)$ there are at most $3$ real solutions with $x_2'=1$, and there may be only $2$ solutions in some cases. In the case $(m,n)=(2,4)$, there are at most $4$ real solutions, although there may be only 2 in some cases. However, in the $(2,2)$ case, there was only one non-negative solution for $A$ and $A^{\$}$, and we have empirically found the same in the $(2,3),(3,2),(2,4),(4,2)$ cases. It turns out that the following holds, as in \citet[Cor.~1]{Bacharach1965}.

\begin{theorem} \label{thm: bachuniqueness}
For every feasible rebalancing problem $(M,a,p)$ the equations in \Cref{defn:market-invariantrebal} have a unique non-negative solution, and it is therefore given by the market-invariant algorithm.
\end{theorem}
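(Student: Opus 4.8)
The existence of a non-negative solution is already guaranteed by \Cref{thm: market-invariantSequence} and \Cref{cor: market-invariantsequence}, so the plan is to establish \emph{uniqueness} of the biproportional matrix $A^{\$}$ (equivalently, of $A$). In the language of \Cref{rem: equirelation}, this amounts to showing that the $\sim$-equivalence class of $M$ contains at most one matrix with row sums $a$ and column sums $p$. First I would suppose, for contradiction, that $(x',p')$ and $(\tilde x',\tilde p')$ are two solutions of the equations in \Cref{defn:market-invariantrebal}, producing biproportional matrices $A^{\$}_{ij}=x_i' M_{ij} p_j'$ and $B^{\$}_{ij}=\tilde x_i' M_{ij} \tilde p_j'$. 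Because all the scalings are strictly positive, both matrices have support exactly equal to that of $M$ (that is, $A^{\$}_{ij}>0 \iff M_{ij}>0 \iff B^{\$}_{ij}>0$), both have row sums $a$ and column sums $p$, and both share the same total mass $\sum_i a_i=\sum_j p_j$.

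The key device is the relative entropy (Kullback--Leibler divergence). On the common support the pointwise ratio factorises,
\[ \frac{B^{\$}_{ij}}{A^{\$}_{ij}} = \frac{\tilde x_i'}{x_i'}\cdot\frac{\tilde p_j'}{p_j'} =: \alpha_i\beta_j, \qquad \alpha_i,\beta_j>0, \]
and this factorisation is exactly what makes the argument collapse. I would then compute the \emph{symmetrised} divergence, using the factorisation and the shared marginals:
\begin{align*}
 \sum_{i,j}\bigl(B^{\$}_{ij}-A^{\$}_{ij}\bigr)\log\frac{B^{\$}_{ij}}{A^{\$}_{ij}}
 &= \sum_{i,j}\bigl(B^{\$}_{ij}-A^{\$}_{ij}\bigr)\bigl(\log\alpha_i+\log\beta_j\bigr)\\
 &= \sum_i \log\alpha_i \sum_j\bigl(B^{\$}_{ij}-A^{\$}_{ij}\bigr) + \sum_j\log\beta_j\sum_i\bigl(B^{\$}_{ij}-A^{\$}_{ij}\bigr)\\
 &= \sum_i \log\alpha_i\,(a_i-a_i) + \sum_j\log\beta_j\,(p_j-p_j) = 0,
\end{align*}
where the last equality uses that $A^{\$}$ and $B^{\$}$ have identical row sums $a$ and column sums $p$. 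The left-hand side is precisely $D(B^{\$}\,\|\,A^{\$})+D(A^{\$}\,\|\,B^{\$})$, the sum of the two relative entropies.

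To finish, I would invoke the non-negativity of relative entropy (the log-sum / Gibbs inequality): for non-negative matrices of equal total mass and common support, $D(B^{\$}\,\|\,A^{\$})\ge 0$, with equality if and only if $A^{\$}=B^{\$}$. Since both summands are non-negative and sum to zero, each vanishes; in particular $D(B^{\$}\,\|\,A^{\$})=0$, forcing $A^{\$}=B^{\$}$ and hence $A=B$. Combined with existence this gives the statement, and the solution set for $(x',p')$ is then exactly the one-dimensional scaling family described in \Cref{defn:market-invariantrebal}. Conceptually this is the information-geometric shadow of the fact that $A^{\$}$ is the unique minimiser of the strictly convex cross-entropy $\sum_{i,j}A^{\$}_{ij}\log\bigl(A^{\$}_{ij}/(M_{ij}p_j)\bigr)$ over the convex, compact set of matrices with the prescribed marginals and support; this recovers \citet[Cor.~1]{Bacharach1965}.

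I expect the main obstacle to lie entirely in the bookkeeping around supports: one must check that strict positivity of the scalings forces both solutions onto exactly $\mathrm{supp}(M)$, so that every log-ratio above is finite and both divergences are well defined, and one must note that it is the feasibility conditions (1)--(2) of \Cref{lem:feasibilityRebal} that make the prescribed marginals attainable on that support in the first place. Once the supports are pinned down, the telescoping computation and the Gibbs inequality do all the remaining work.
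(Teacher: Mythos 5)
Your argument is correct, and it is a genuinely different route from the one the paper takes. The paper does not give a self-contained proof of \Cref{thm: bachuniqueness}: it imports the statement from \citet[Cor.~1]{Bacharach1965} and then remarks that uniqueness also follows from the strict convexity of the entropy objective in the optimisation formulation of \S\ref{sec: MIoptimisation} (any two solutions of the marginal equations would be two optimisers of a strictly convex function over a convex feasible set). Your proof instead attacks uniqueness head-on: the factorisation $B^{\$}_{ij}/A^{\$}_{ij}=\alpha_i\beta_j$, the telescoping of $\sum_{i,j}(B^{\$}_{ij}-A^{\$}_{ij})(\log\alpha_i+\log\beta_j)$ against the shared marginals, and the sign of the symmetrised divergence. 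This buys a short, elementary, citation-free argument --- indeed you do not even need the full Gibbs inequality, since each summand $(B^{\$}_{ij}-A^{\$}_{ij})\log(B^{\$}_{ij}/A^{\$}_{ij})$ is individually non-negative because $\log$ is increasing, so a zero sum forces every term to vanish. What the paper's convex-duality route buys instead is the identification of the solution with the KL projection of $M\odot(pe_m)^T$ onto the transportation polytope, which it then reuses for the economic interpretation of the dual variables. Your handling of the support issue is the right thing to worry about and is resolved correctly: \Cref{defn:market-invariantrebal} requires $x'\in(0,\infty)^m$ and $p'\in(0,\infty)^n$, so both candidate solutions have support exactly $\mathrm{supp}(M)$ and all log-ratios are finite. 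The only caveat worth flagging is at the very end: $A^{\$}=B^{\$}$ forces $\alpha_i\beta_j=1$ on the support, which collapses $(x',p')$ to the one-parameter scaling family only when the bipartite support graph of $M$ is connected; if it is disconnected there is an independent scaling per component, though the allocation $A$ is of course still unique.
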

This means the market-invariant algorithm gives this unique solution, and the market-invariant rebalancing process from \Cref{def:marketinvar} is well defined and equivalent to the output of the algorithm. 

Note that \citet[Thm.~3]{Bacharach1965} (see also \citet[\S~2.6]{Senata2006}) determines that the algorithm only converges provided certain conditions hold. These conditions are equivalent to the feasibility conditions from \Cref{lem:feasibilityRebal}. 

In the next section we will show that we can formulate the market-invariant rebalancing process as convex optimisation problem. As the objective function is a strictly concave function (refer to \Citet[\S3.1.4--1.5,\S3.2]{Boyd} for details), then this directly implies \Cref{thm: bachuniqueness}.

\subsubsection{Interpretation as an optimisation program} \label{sec: MIoptimisation}
The market-invariant algorithm can be interpreted as an optimisation program. The following is adapted from \citet[\S~3.2]{lahr2004} as first described in \citet{Uribe1966} using tools from information theory. Here the objective function is known as the \emph{information inaccuracy} of $A^{\$}$, which is also called the Kullbaclk-Leibler divergence (see for example \cite[pg.~552]{Handbook2018}). The program can be written as 
\begin{align} \label{eqn:optimisation}
    &\max \sum_i\sum_jA^{\$}_{ij}\log\left(\frac{A^{\$}_{ij}}{M_{ij}}\right)\\ \nonumber
    \text{subject to } \quad &\sum_iA^{\$}_{ij}=p_j \quad \forall j, \quad \sum_jA^{\$}_{ij}=a_i \quad \forall i, \quad A^{\$}_{ij} \ge 0 \quad \forall i,j.
\end{align}
Note for a given $i,j$ if $M_{ij}$ is equal to 0, we set $A^{\$}_{ij}=0$ and the objective function does not sum over this $i,j$ pair.

Using Lagrange multipliers, we can solve the above optimisation problem. We have the Lagrangian 
\[ \mathcal{L}(A^{\$},\lambda,\nu) = \sum_i\sum_jA^{\$}_{ij}\log\left(\frac{A^{\$}_{ij}}{M_{ij}}\right) + \sum_i \lambda_i(a_i - \sum_jA^{\$}_{ij})+  \sum_j\nu_j(p_j-\sum_i A^{\$}_{ij}),\] with $\lambda\in \R^m,\nu \in \R^n$.
Taking the derivative with respect to $A^{\$}_{ ij}$ and setting equal to zero gives 
\[\log(\frac{A^{\$}_{ij}}{M_{ij}}) +1-\lambda_i-\nu_j =0.\] This implies that 
\[ A^{\$}_{ij} = M_{ij}\exp(-1 + \lambda_i + \nu_j) =\exp(-1 + \lambda_i) M_{ij}\exp(\nu_j),\] for all $i,j$. So we have that $x_i' = \exp(-1 + \lambda_i)$ and $p'_j = \exp( \nu_j)$, using the notation from \Cref{defn:market-invariantrebal}. This shows that the optimisation program is equivalent to the market-invariant rebalancing process. 

As the optimisation program is convex, then convex optimisation algorithms can be used to solve this numerically, as in \citet{Boyd}.

We note that program is equivalent to following
\begin{align*}
    &\max \left ( \prod_{j=1}^n\prod_{i=1}^n\left(\frac{A^{\$}_{ij}}{M_{ij}}\right)^{A^{\$}_{ij}}\right)^{\frac{1}{\sum_{ij}A^{\$}_{ij}}} \\
    \text{subject to } \quad &\sum_iA^{\$}_{ij}=p_j \quad \forall j, \quad \sum_jA^{\$}_{ij}=a_i \quad \forall i, \quad A^{\$}_{ij} \ge 0 \quad \forall i,j.
\end{align*}
Here the objective function is a weighted geometric mean. This looks very similar to Fisher's market model, as in \citet[pg.~106]{Nisan2007}, where unlike our model the objective function is geometric mean of a linear sum of utilities. However, we can equivalently write the optimisation problem as 
\begin{align*}
    &\max \sum_{j=1}^n\log\left(\prod_{i=1}^n\left(\frac{A^{\$}_{ij}}{M_{ij}}\right)^{A^{\$}_{ij}}\right) \\
    \text{subject to } \quad &\sum_iA^{\$}_{ij}=p_j \quad \forall j, \quad \sum_jA^{\$}_{ij}=a_i \quad \forall i, \quad A^{\$}_{ij} \ge 0 \quad \forall i,j.
\end{align*}
Then we could define utility functions $u_{ij}:[0,\infty)\to [0,\infty)$ by 
\[u_{ij}(A^{\$}_{ij}) =\left(\frac{A^{\$}_{ij}}{M_{ij}}\right)^{A^{\$}_{ij}},\] and then define the product utility for each portfolio as 
\[ u_j = \prod_iu_{ij}.\]
Then this is almost in the form of a Eisenberg--Gale-type convex program, as defined in \citet[\S~5]{Jain2010}. This is a class of convex optimisation programs that include Fisher's market model, as in \citet[\S~5.2]{Nisan2007}. Note however that this class of algorithms generally only uses one of the two sets of constraints, and usually relaxed to an inequality, say $\sum_jA^{\$}_{ij}\le a_i$. So while these resource allocation problems are related to this, they are not quite the same.

Finally we can consider the convex dual of \Cref{eqn:optimisation}, which is the following program
\begin{align} \label{eqn:optimisationdual}
    &\min_{\lambda,\nu} \sum_i\lambda_ia_i+\sum_j\nu_jp_j \\ \nonumber
    \text{subject to } \quad & p_j\exp(-\nu_j) = \sum_iM_{ij}\exp(\lambda_i), \quad \forall j,\\
    \quad &a_i\exp(-\lambda_i) = \sum_j M_{ij}\exp(\nu_j), \quad \forall i.
\end{align}
We see that there is an invariance up to scale, were for a solution $\lambda_i,\nu_j$ then for any $r\in \R$ we have that $\lambda_i+r,\nu_j-r$ is also a solution. Making $r$ large enough would ensure that $\lambda_i+r$ is positive for all $i$ and $\nu_j-r$ is negative for all $j$. Then we could interpret this optimisation program as minimising the prices $\lambda_i$ and $\nu_j$ (up to scale) of supplying the assets $a_i$ and portfolios $p_j$ respectively. This is similar to the way the dual variables are considered prices in the Eisenberg--Gale-type convex programs, \cite[\S~5]{Jain2010}. 

Of course, in general we do not expect that the objective function has a minimum of zero, which would only occur in the rare case that the solution is $A^{\$}=M$. If we interpret the $\lambda_i$ and $\nu_j$ as prices, the objective function could then be considered the total cost of the solution for $A^{\$}\ne M$.

\subsubsection{Interpretation of the R and S in the RAS algorithm}
The RAS algorithm is equivalent to our market-invariant algorithm as it seeks to find a matrix $B$ of the form $RAS$ where $R$ and $S$ are square diagonal matrices. Here $B$ corresponds to our $A^{\$}$, while $R$ corresponds to a diagonal matrix with our $x'$ along the diagonal, and $S$ corresponds to a diagonal matrix with our $p'$ along the diagonal. There are some questions around how to interpret $R$ and $S$, with \citet[\S 2.1-2.2]{lahr2004} discussing several alternatives. 

\citet[pg.~23]{Bacharach1971} suggests in particular that while 
\citet{Leontief1941} and \citet{Neisser1941} propose interpreting these matrices as proportional changes in either the asset classes or the portfolios, that this seems implausible in practice.

We think our interpretation in \Cref{defn:market-invariantrebal} is an important addition to this discussion: $S$ consists of portfolio values $p'$ that would allow $Mp'=a'$, i.e. that give a solution $A'$ that is exactly equal to the target $M$. On the other hand, $R$ represents the increase or decrease in asset class values only that, given $a'$ and $p'$, would result in the current $p$ and $a$. 

Note that the invariance under scale property of these solutions would allow us to write $p'$ such that $p'_1=p_1$. Then $p'-p$ could be interpreted as how much the portfolio values would need to change (increasing or decreasing) relative to $p_1$ as a linear difference to get exactly to the target $M$ for each portfolio, and $x'$ represents the same for the asset classes but the proportional increase or decrease. We can also rescale so that $a'_1=a_1$, then $a'-a$ would be how much the asset classes would need to change to get exactly to the target $M$ relative to $a'$ to be at the target. These representations can be helpful for external rebalancing, as this determines how much would be required to trade with the market to return an asset allocation closer to the target $M$.

We should also point out here the symmetry between the asset classes and portfolios: we have used the convention from industry that the matrix $A$ (or $M$) has the column sums equal to $1$, so it represents the proportion of the portfolio value that is made up of each asset class. One can rescale each row of $A$ (or $M$) to $A'$ (and $M'$) so that instead the rows sum to 1, which is then the proportion of each asset class that is allocated to each portfolio. If one then applied all of the same theory following this symmetry, determining the corresponding $a'$, $p'$ and $x'$, we would have $M'^Ta'=p'$, and this $x'$ would represent the proportional increase in each portfolio to achieve $A'=M'$. This is less relevant for our application but may be relevant for interpreting $R$ and $S$ in other areas.

In summary, contrary to \citet{Bacharach1971}, we suggest interpreting these matrices as proportional changes in the asset classes or the portfolio values is appropriate for this application.

\section{Comparisons of different processes} \label{subsec:comparisons}
In this section we compare the performance of the different rebalancing processes as they are used in practice in portfolio management. We start with a theorectical result.

\subsection{Theoretical results}
Recall that we are motivated by the following scenario: a portfolio manager has a series of portfolios with values $p_1,\ldots,p_n$ that use asset classes with values $a_1,\ldots,a_m$. These asset classes need to be fully allocated to the portfolio and each portfolio $j$ has a target proportion $M_{ij}\ge 0$ for asset class $i$. 

At any given time, the manager allocates the asset classes to the different portfolios according to their chosen rebalancing process. Over time, market movements and cash flows change the allocation of each of the portfolios. The manager then reapplies the rebalancing process to rebalance the portfolios closer to their targets --- usually on a weekly or monthly basis. 

Portfolio managers have not previously studied how the different process may advantage or disadvantage their portfolios with respect to market movements. In particular, we note that the banker process is often used as it is said to balance out any advantages or disadvantages to any particular portfolio overtime. The following theorem contradicts this, showing that the banker portfolio is always disadvantaged during `volatile' periods. Specifically, we take any series of returns that result in the same asset allocation as started with and, ignoring all cash flows, show that the banker process is always disadvantaged by this process. This means that market fall and recovery events disadvantage the banker portfolio, as does any period of increased market volatility. This theorem motivates using the market-invariant process instead of the banker or linear process.

\begin{theorem} \label{thm:advantaging}
Let $(M,a,p)$ be a rebalancing problem. Take any series of returns $r_i^t\in (0,\infty)$  for the asset classes $i=1,\ldots,m$ over time periods $t=1,2,\ldots, T$ such that the total return for each asset class over the time period is zero, so 
\[\prod_{t=1}^T (r_{i}^t+1)=1 \quad \text{ for all } i=1,\ldots, m.\] Here we say that the returns are \emph{tethered} so that total return is zero.

For a given rebalancing process, proceed by first applying the rebalancing process to allocate the assets to the portfolios. Then apply the returns for the asset classes for the first time period, resulting in performance of the portfolio corresponding to the ratio of asset classes it contains. Then apply the rebalancing process again to redistribute the assets, then calculate the performance again, and continue through all time periods. Then we have that
\begin{enumerate}
    \item The market-invariant rebalancing process applied at each time period will result in all the portfolios having the same return of 0\%, aligning with the asset classes returning 0\%. No portfolio is advantaged over another.
    \item The linear rebalancing process applied at each time period will result in all the portfolios having slightly different total returns, depending on the order of the returns. This means that some portfolios are advantaged while others are disadvantaged.
    \item The banker rebalancing process applied at each time period will result in the banker portfolio returning negative returns and all other portfolios returning positive returns. That is, the banker is disadvantaged compared to all other portfolios, regardless of the returns. 
\end{enumerate}
We show this matches empirical results.
\end{theorem}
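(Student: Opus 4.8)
The plan is to track the dollar-allocation matrix $A^{\$}$ through repeated cycles of rebalance-then-apply-returns, exploiting one global invariant: since rebalancing is internal and the returns are tethered, each asset class total returns to its starting value because $a_i\prod_{t}(1+r_i^t)=a_i$, so the total fund value $\sum_j p_j=\sum_i a_i$ is conserved over the full horizon. Consequently, to prove the banker is disadvantaged it will suffice to show every non-banker portfolio has strictly positive total return, since conservation then forces the banker's return to be negative.

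For part (1), I would read the market-invariance property directly off \Cref{def:marketinvar}(2): applying the proportional change $x_i=1+r_i^t$ to the asset classes and then rebalancing returns the allocation $A^{\$}_x=A^{\$}\odot(xe_n)$, i.e. the holdings are merely scaled element-wise and no internal trade occurs. Composing over the $T$ periods, portfolio $j$'s holding in asset $i$ evolves as $A^{\$}_{ij}\prod_t(1+r_i^t)=A^{\$}_{ij}$, so every portfolio returns exactly to its starting value, giving a $0\%$ return for all, with no portfolio advantaged.

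The core of the theorem is part (3). First I would check that under the banker process each non-banker portfolio is, in effect, a fixed-weight portfolio continuously rebalanced to its target weights $w_i=M_{ij}$: after a period's returns a non-banker holding $M_{ij}p_j$ grows to value $p_j\bigl(1+\sum_i M_{ij}r_i^t\bigr)$, and \Cref{alg:BankerRebal} then resets it to target proportions of that new value, with the banker supplying or absorbing the difference. Its multi-period growth factor is therefore $\prod_t\bigl(1+\sum_i M_{ij}r_i^t\bigr)$. Taking logarithms and using concavity of $\log$ (Jensen's inequality), $\log\bigl(\sum_i M_{ij}(1+r_i^t)\bigr)\ge\sum_i M_{ij}\log(1+r_i^t)$ with equality only when all $1+r_i^t$ coincide; summing over $t$ and invoking the tethered condition $\sum_t\log(1+r_i^t)=0$ gives $\sum_t\log\bigl(1+\sum_i M_{ij}r_i^t\bigr)\ge 0$, strict whenever the returns are genuinely volatile. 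Thus every non-banker strictly gains, and by the conservation argument above the banker strictly loses. I expect this Jensen step, together with verifying the non-banker-is-fixed-weight reduction, to be the main substance of the proof.

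The remaining piece is part (2), which is the least rigid statement. Here the linear process adds the common increment $d_i$ to every portfolio (\Cref{alg:LinearRebal}), so a portfolio is not rebalanced to fixed weights and the one-period update of the holdings is an affine map depending on the current $p_j$ and on the realised returns. I would show this composite map is neither the identity nor symmetric under reordering of the periods, so that total returns are generally nonzero and path-dependent; since $\sum_j p_j$ is conserved, any portfolio that gains is matched by another that loses. I anticipate the main obstacle to be phrasing ``slightly different, order-dependent'' precisely rather than proving it, and I would pin it down with a representative $2\times2$ up-then-down computation before confirming agreement with the empirical results.
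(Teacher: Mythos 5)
Your proposal is correct, and its overall architecture matches the paper's: part (1) is read off the market-invariance property (the paper likewise notes that after the first allocation no internal trade ever occurs); part (3) reduces each non-banker to a fixed-weight portfolio with per-period growth factor $\sum_i M_{ij}(1+r_i^t)$ and then shows $\prod_t\bigl(\sum_i M_{ij}(1+r_i^t)\bigr)\ge 1$, with conservation of the total fund value forcing the banker into a loss; and part (2) is, in both treatments, essentially an order-dependence observation backed by simulation rather than a theorem. The one genuine difference is the key inequality in part (3). The paper expands the product into the full multinomial sum $\sum_{i_1,\dots,i_T}M_{i_1j}\cdots M_{i_Tj}(1+r_{i_1}^1)\cdots(1+r_{i_T}^T)$, groups terms into permutation orbits, applies the (unweighted) AM--GM inequality on each orbit using the tethering condition, and reassembles via the Multinomial Theorem. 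You instead apply the weighted AM--GM inequality (Jensen for $\log$ with weights $M_{ij}$) to each period separately and sum over $t$, so the tethering condition $\sum_t\log(1+r_i^t)=0$ kills the lower bound in one line. The two routes prove the same inequality with the same equality condition (all asset classes on the support of column $j$ moving in unison every period), but yours is shorter, avoids the combinatorial bookkeeping, and makes the ``rebalancing bonus / volatility drag'' mechanism more transparent; the paper's version has the minor virtue of exhibiting the exact multinomial expansion of the portfolio's terminal value. Your conservation step is also slightly cleaner than the paper's, which phrases it directly as the banker's terminal value being the residual of a conserved total. No gaps.
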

\begin{proof}
For point 1, this results by definition of the market-invariant rebalancing process. In fact, at each time period other than the first, no reallocation is necessary. Empirically using the code in \Cref{app:code}, simulations show that the return is zero to 15 decimal places, which are due to rounding errors. This means that no portfolio is advantaged over another portfolio when using this process.

For point 2 we show this empirically through simulations. Note that if all assets moved the same amount at each time period, no reallocation is necessary. In general however, the return will depend on the order of the returns. This means that using the linear rebalancing process, some portfolios are advantaged and some portfolios are disadvantaged due to market movements.

For point 3 we will take an arbitrary portfolio $j$ such that it is not the banker. We want to consider the ratio \[ \frac{p_j^T}{p_j}\] where $p_j^t$ is the value of the portfolio at time $t$, and $p_j^0=p_j$. If this ratio is greater than 1, this portfolio has made money while less than one means this portfolio has lost money. Then consider that at time $t$
\[p_j^t = \sum_{i=1}^m M_{ij} (r_i^1+1)p_j^{t-1}.\] 
Iterating, we see that we have
\begin{align*}
    \frac{p_j^T}{p_j} &= \frac{\sum_{i_1,\ldots,i_T=1}^m M_{i_1j}M_{i_2j}\cdots M_{i_Tj} (1+r_{i_1}^1)(1+r_{i_2}^2)\cdots (1+r_{i_T}^T) p_j^{0}}{p_j^0} \\
    &= \sum_{i_1,\ldots,i_T=1}^m M_{i_1j}M_{i_2j}\cdots M_{i_Tj} (1+r_{i_1}^1)(1+r_{i_2}^2)\cdots (1+r_{i_T}^T).
     \end{align*}
We want to prove that this is greater than or equal to 1 for this portfolio to not have been disadvantaged through applying the banker process. This is trivially true for $T=1$, as then $(1+r_i^1)=1$ and the performance is exactly equal to 1.

The proof can be derived from using $\prod_{t=1}^T (1+r_{i}^t)=1$ to show
\begin{equation}\label{eqn:conjecturer}
\sum_{\sigma \in P(i_1,i_2,\ldots, i_T)}(1+r_{\sigma_1}^1)(1+r_{\sigma_2}^2)\cdots (1+r_{\sigma_T}^T)\ge  |P(i_1,i_2,\ldots, i_T)|
\end{equation} for all choices of $i_1,\ldots, i_T\in \{1,2,\ldots m\}$. 
Here $P(i_1,i_2,\ldots, i_T)$ is the set of all distinct permutations of $i_1,\ldots, i_T$ and $p=|P(i_1,i_2,\ldots, i_T)|$ is the total number of permutations. Note that this is equal to the multinomial coefficient 
\[ {T \choose k_1,\ldots, k_S}\] 
where $S$ is the number of distinct elements in the list $i_1, \ldots, i_T$, and $k_S$ is the number of times that element appears in the list. For example, if we have $i_1=1, i_2 =1, i_3 =2$ then $S=2$ and $k_1=1$ and $k_2=2$.

We can then prove \Cref{eqn:conjecturer} as we have 
\begin{align*}
    \frac{1}{p}\sum_{\sigma \in P(i_1,i_2,\ldots, i_T)}(1+r_{\sigma_1}^1)(1+r_{\sigma_2}^2)\cdots (1+r_{\sigma_T}^T)&\ge \sqrt[\leftroot{-3}\uproot{5}p]{\prod_{\sigma \in P(i_1,i_2,\ldots, i_T)}(1+r_{\sigma_1}^1)(1+r_{\sigma_2}^2)\cdots(1+ r_{\sigma_T}^T)}\\
    & = \sqrt[\leftroot{-1}\uproot{3}p]{\prod_{i\in \{i_1,\ldots, i_T\}}\prod_{t=1}^T r_{i}^t}\\
    &= \sqrt[\leftroot{-3}\uproot{5}p]{\prod_{i\in \{i_1,\ldots, i_T\}}1}\\
    &= \sqrt[\leftroot{-3}\uproot{3}p]{1}\\
    &= 1,
\end{align*}
where the first line uses that arithmetic means are greater than or equal to geometric means, as in \citet[\S~2.5]{Hardy1988}. Note that equality occurs if and only if all terms of the form \[(1+r_{\sigma_1}^1)(1+r_{\sigma_2}^2)\cdots (1+r_{\sigma_T}^T)\] are equal.

If we write that
\[\mathcal{A}(m,T) = \left\{(i_1,\ldots, i_T) : i_1\le i_2 \le \ldots \le i_T, i_t \in \{1,2,\ldots, m\}\right\},\] then we have that
\begin{align*} 
\sum_{i_1,\ldots,i_T=1}^m& M_{i_1j}M_{i_2j}\cdots M_{i_Tj} (1+r_{i_1}^1)(1+r_{i_2}^2)\cdots (1+r_{i_T}^T) \\ &= \sum_{(i_1,\ldots, i_T)\in \mathcal{A}(m,T)}M_{i_1j}M_{i_2j}\cdots M_{i_Tj}\sum_{\sigma \in P(i_1,i_2,\ldots, i_T)}(1+r_{\sigma_1}^1)(1+r_{\sigma_2}^2)\cdots (1+r_{\sigma_T}^T)\\
& \ge \sum_{(i_1,\ldots, i_T)\in \mathcal{A}(m,T)}M_{i_1j}M_{i_2j}\cdots M_{i_Tj}|P(i_1,i_2,\ldots, i_T)| \quad \text{using \cref{eqn:conjecturer}}\\
&= (\sum_{i=1}^m M_{1j})^T = 1,
\end{align*} as required, where the last line uses the Multinomial Theorem (see \citet{Tauber1963} or \citet{Riordan1978} for details).   

Equality only occurs if all terms $(1+r_{\sigma_1}^1)(1+r_{\sigma_2}^2)\cdots (1+r_{\sigma_T}^T)$ are equal for all possible $\sigma \in P(i_1,i_2,\ldots, i_T)$. This only occurs if every $r_{i_1}^t=r_{i_2}^t$ for all $i_1,i_2$ and for all $t=1,\ldots, T$. This means that unless all asset classes move in unison in every time period, every portfolio except banker portfolio will be advantaged by applying the banker process. 

Unfortunately, this then implies that the banker's performance is 
\[ \frac{\sum_j{p_j} - \sum_{j\ne j_b}{p^T_j}}{\sum_j{p_j} - \sum_{j\ne j_b}{p^0_j}}< 1,\] whenever the asset classes do not move in unison, so that the banker process always disadvantages the banker portfolio. In effect, the banker process is giving returns to the other portfolios. 

In \Cref{fig:empiricalHist} we have empirically tested this result for $t=30$. We see that the banker process results in negative returns for the banker, while the market-invariant process results in zero return (up to rounding errors in order of $10^{-15}$) and the linear process returns in both positive and negative returns.
\end{proof}

\begin{figure}
    \centering
    \includegraphics[width=15cm]{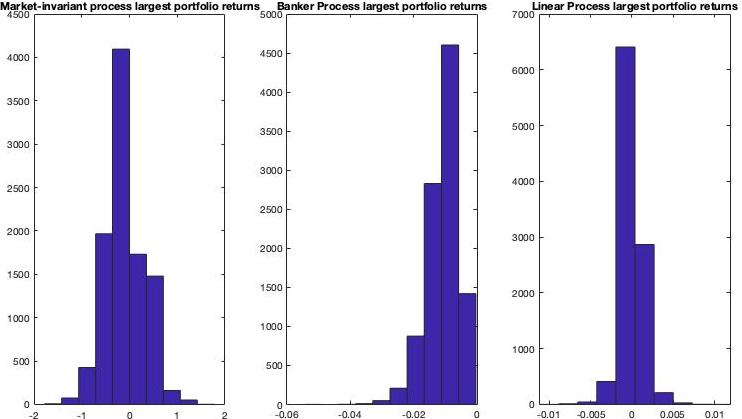}
    \caption{Histograms of the returns of the largest/banker portfolio under 3 different rebalancing processes run over 30 time periods with 10,000 samples. The last two time periods ensure that the total return for each asset class is zero. We observe the market-invariant rebalancing process returning near zero in the order of $10^{-15}$ which are rounding errors, the banker process always returning a negative return, while the linear rebalancing process sometimes returns negative and sometimes returns positive returns. Code for this example is in \Cref{alg:Comparisonsmatlab}.}
    \label{fig:empiricalHist}
\end{figure}

An initial question from this theorem is to what extent is the banker portfolio disadvantaged? Another observation of the previous theorem is that the tethered return series described appears theoretical and unlikely to be observed in practice. In the following section we address these points.

\subsection{Simulation results} \label{rem: advantaging}
We first consider the situation of the previous theorem where the returns are tethered to zero to understand how much the banker portfolio is disadvantaged. To do this, we set up a \emph{shadow} portfolio, which is a portfolio that has the same target asset allocation as the banker. We then want to measure the difference between the returns of the shadow portfolio and the banker portfolio over the time series. As we proved in \Cref{thm:advantaging}, this will always be negative when the returns are tethered unless all the returns are the same. However, we may want to understand what contributes to the extent of the negative returns. 

To do this we plotted the difference between the performance of the banker portfolio and the shadow portfolio against the weighted variance of the return series of the asset classes. The weighting here is by the starting value of the asset classes. Using the banker process, we see that a linear regression model is statistically significant with higher variance resulting in lower performance of the banker compared to the shadow portfolio. The R-squared value is 0.464, suggesting nearly half of the trend in performance is explained by the weighted variance of the asset classes. 
\begin{figure}
    \centering
    \includegraphics[width=15cm]{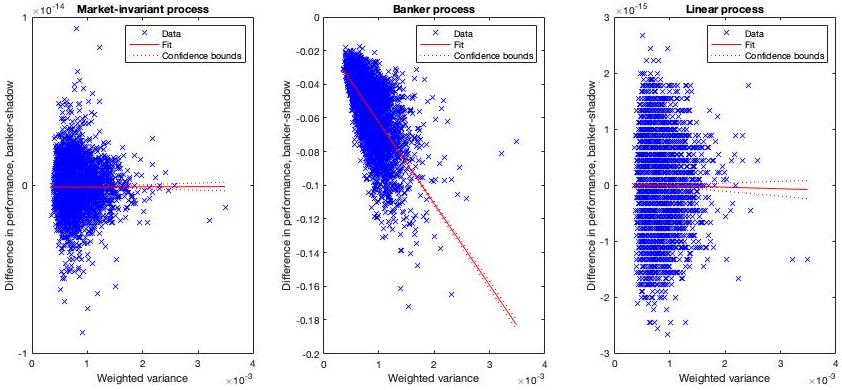}
    \caption{Scatterplots of the difference in performance of the largest/banker portfolio with its shadow portfolio against the weighted variance of the asset classes. The weighting is by starting asset class value. The three different rebalancing processes were run over 30 time periods with 10,000 samples. The last two time periods ensure that the total return for each asset class is zero. Linear regression models are added. The market-invariant and the linear process are not statistically significant, and we note the performance difference is in the order of $10^{-15}$ which are rounding errors. The banker process has a statistically significant linear trend with p-value of less than $10^{-100}$. Initial target asset allocation $M$, return series and initial asset allocations are chosen randomly, and the shadow banker has the same target asset allocation as the banker/largest portfolio.}
    \label{fig:empiricalHist2}
\end{figure}

We are aware that having the returns tethered exactly to zero may seem unlikely to occur in practice. However, we expect this to be illustrative of a market fall and recovery event. More generally, an observed return series could be considered as a decomposition into a volatile tethered series and an increasing or decreasing trend, so there is some merit in considering the volatile tethered series separately. However, the process of rebalancing is not so easily decomposed in this way. 

Running the same analysis on untethered data, so that the end values of the asset classes are random and may be higher or lower than their starting values, we observe similar but more noisy results. There is still a statistically significant linear trend between the difference in performance of the banker and the shadow portfolio against the weighted variance of the asset classes for the banker process. The R-squared value has lowered to 0.0174, suggesting only around $1\%$ of the difference in performance is explained by the variance. Interestingly, the difference in performance is now both positive and negative, although we note that $62\%$ of the time the difference remained negative and the banker portfolio was disadvantaged in these cases. 

\begin{figure}
    \centering
    \includegraphics[width=15cm]{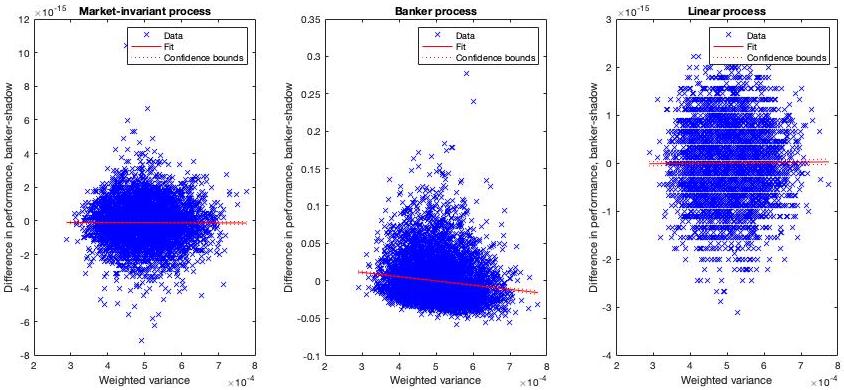}
    \caption{Scatterplots of the difference in performance of the largest/banker portfolio with its shadow portfolio against the weighted variance of the asset classes. The weighting is by starting asset class value. The three different rebalancing processes were run over 30 time periods with 10,000 samples. The returns are untethered so the asset class end values are random and may be higher or lower than the starting values. Linear regression models are added. The market-invariant and the linear process are not statistically significant, and we note the performance difference is in the order of $10^{-15}$ which are rounding errors. The banker process has a statistically significant linear trend with p-value of less than $10^{-100}$. Initial target asset allocation $M$, return series and initial asset allocations are chosen randomly, and the shadow banker has the same target asset allocation as the banker/largest portfolio.}
    \label{fig:empiricalHist3}
\end{figure}

We see that the weighted variance is in the order of $10^{-4}$, yet the difference in performance is between -10 to +30 basis points. This is a sizeable enough difference that funds would want to mitigate to maintain equity between the different portfolios. This is another reason to prefer the market-invariant process where the difference is a rounding error in the order of $10^{-15}$.

Finally, we also analysed whether the weighted performance of the asset classes is also a predictor of the performance differences between the banker and shadow portfolio. We indeed found this is the case, with positive performance contributing to banker outperforming the shadow portfolio. This matches intuition, where increasing markets will benefit the banker by overweighting the banker to the positive performing asset classes. While the banker portfolio is benefited in these markets, the other portfolios are disadvantaged and we see the shadow portfolio under perform. 

The model with the best fit we found to be the following 
\begin{equation} P_b-P_s = 0.032+0.127r+2.159r^2-65.77v\label{eqn:bestfit}\end{equation} where $P_b$ is the performance of the banker, $P_s$ is the performance of the shadow, $v$ is the weighted variance and $r$ is the weight return of the asset classes. The full results are in \Cref{app: regressionresults}. The R-squared is slightly higher at $0.0517$. 

This is the beginning of larger analysis that would be needed to understand exactly how the banker process impacts returns of the different portfolios. However, these results tell us that the banker process does indeed impact the returns of the portfolios. It may advantage or disadvantage the banker portfolio, and increasing the volatility in general disadvantages the banker's performance while increasing performance advantages the banker's performance. 

While neither the market-invariant nor linear process advantage nor disadvantage portfolios with the same target asset allocations, from \Cref{thm:advantaging} we know that in general the linear process is also advantaging or disadvantaging portfolios with different target asset allocations due to market movements. This affirms the recommendation that the market-invariant process be used for internal rebalancing processes to prevent any portfolio being advantaged over another due to market movements.

\section{Conclusion}

While internal rebalancing processes have been used in the financial industry for some time, this paper is the first study of these processes in the literature. We have summarised the important linear and banking processes used in practice, while we have detailed the new market-invariant process. In \S\ref{subsec:comparisons} we have shown in particular how the market-invariant process addresses issues with the banker and linear process in advantaging or disadvantaging portfolios due to market movements. 

One may be concerned that the results in \Cref{thm:advantaging} consisted of a specific theoretical set-up, which may not appear in practice. However, this set-up is very similar to a market fall and recovery. In \Cref{rem: advantaging} we discuss how more general market movements affect the banker portfolio when using a banker process, with increasing volatility further disadvantaging the banker while increasing returns advantaging the banker and disadvantaging other portfolios. By comparison, the market-invariant process does not advantage or disadvantage any portfolios due to market-movements. 

In addition, as the market-invariant process spreads the underweight/overweights to asset classes across all the portfolios, it also spreads the associated risks proportionally to the target asset allocations. This is unlike the banker process which distorts only the banker's asset allocation, and unlike the linear process that tends to over-allocate overweights to portfolios with smaller target allocations. In using the banker or linear process, one then needs to monitor not just the overall overweights/underweights to each asset class, but also how each portfolio is affected by these.

To summarise the benefits of the market-invariant process, they include not advantaging nor disadvantaging portfolios due to market movements, spreading overweight/\allowbreak underweights (including risk) to asset classes proportionally, and always giving the portfolios the same asset allocations if they have the same target asset allocations. 
Given these results, we recommend that the market-invariant process be used for such internal rebalancing processes instead of the linear or banker process. We have left open the question of how cash flows between portfolios, between asset classes, and into or out of the fund will affect rebalancing processes, which we leave for future work. 

\section{Acknowledgements}
The author would like to acknowledge Adjunct Professor Langford White and Adjunct Professor Nicholas Buchdahl from the University of Adelaide for their advice and comments. We also acknowledge Statewide Super's Chief Investment Officer Con Michalakis and Deputy Chief Investment Officer Chris Williams for their understanding of superannuation funds' rebalancing processes and their comments. Additionally, we would like to thank Mr Cameron Baulderstone for the references on optimal power flow.

\newpage

\bibliographystyle{abbrvnat}
\bibliography{ERA_bibliography}
\addcontentsline{toc}{section}{References}

\newpage
\appendix

\section{Electricity networks as rebalancing problems}
\label{app: electricitynetworks}

Electricity networks are related to rebalancing problems. They are related to the supply-demand problems mentioned in \Cref{subsec:supplydemandgeneral}, as the network requires that the the supply of electricity is equal to the demand. 

More specically, in the assumptions used for DC optimal powerflow as in \citet{Bienstock}, the reactance of each line $x_{ij}$ is used to determine the powerflow along the collection of lines $(i,j)$ which start at a node $i$ and end at node $j$. Generation (supply) and demand can be placed at different nodes. 

To make this align with a rebalancing problem $(M,a,p)$, let the starting nodes $i$ be the supply nodes $i=1,\ldots, m$ and the ending notes $j$ be the demand nodes $j=1,\ldots, n$ and say that there is a line between the nodes $i$ and $j$ if $M_{ij}\ne 0$. Then given the \emph{power transfer distribution matrix} $D$ we can determine the (real/active) power $P_{ij}$ along each line $(i,j)$ by writing 
\[P_{ij} = \sum_{k}D_{ij,k}P_k.\]
Here $P_{ij}$ is positive if the power flows from line $i$ to $j$ and negative if from $j$ to $i$. The value $k$ iterates along each supply node $i$ and each demand node $j$ with the injected or withdrawn power at node denoted $P_k$, with injected power being positive $P_i = a_i$ and withdrawn power being negative $P_j = -p_j$. 

The matrix $D$ is determined in terms of the `reactance' $x_{ij}$ (a property of a powerline similar to resistance) and the `line susceptance' and `bus reactance' matrices, which are determined from the geometry of the network as in \citet[\S~3.8]{Chatz2018}. An equivalent formulation is also available in \citet[pg.~13--16]{Bienstock}. 

While the $x_{ij}$ are not required to have columns summing to 1, we can let $x_{ij}=M_{ij}$ and then we would have that $A^{\$}_{ij}=P_{ij}$ is a solution to a rebalancing problem $(M,a,p)$ up to one issue: this does not necessarily return non-negative solutions, as there is no reason electricity cannot flow in either direction. So this is a valid rebalancing process if we relax the condition that $A$ be non-negative.

These calculations are used in optimal powerflow models. They form the constraints used to ensure that supply equals demand when determining optimal electricity generation to minimise costs, as discussed further in \cite{Bienstock,Chatz2018}.

\newpage

\section{Matlab code} \label{app:code}

\begin{algorithm}
\caption{Market-invariant rebalancing matlab code for $m=n=2$} \label{alg:EqtRebalmatlabmn2}
\begin{verbatim}
M=[0.3,0.5;0.7,0.5];
p=[120,180];
a=[100,200];

constant1 = M(1,1)*M(2,1)*(a(1)+a(2))/p(1);
constant2 = M(1,1)*M(2,2)*(a(1)/p(1) - a(2)/p(2)) + ...
                M(2,1)*M(1,2)*(a(2)/p(1)-a(1)/p(2));
constant3 = -M(1,2)*M(2,2)*(a(1)+a(2))/p(2);
scale1=(-constant2+sqrt(constant2^2-4*constant1*constant3))/(2*constant1);
scale2=(-constant2-sqrt(constant2^2-4*constant1*constant3))/(2*constant1);
p1=[scale1,1];
p2=[scale2,1];

%positive solution
pprime=[max(scale1,scale2),1];
%pprime=[min(scale1,scale2),1];
x=zeros(1,2);
x(1) = a(1)/(M(1,:)*pprime');
x(2) = a(2)/(M(2,:)*pprime');
A=x'.*M.*pprime
Ad= A./sum(A)

%negative solution
%pprime=[max(scale1,scale2),1];
pprime=[min(scale1,scale2),1];
xn=zeros(1,2);
xn(1) = a(1)/(M(1,:)*pprime');
xn(2) = a(2)/(M(2,:)*pprime');

An=xn'.*M.*pprime
Adn= An./sum(An)
\end{verbatim}
\end{algorithm}

\begin{algorithm}
\caption{Market-invariant rebalancing Matlab code} \label{alg:EqtRebalmatlab}

\begin{verbatim}
M=[0.3,0.4,0.5,0.1;0.3,0.2,0.3,0.4;0.4,0.4,0.2,0.5];
p=[1030,40,50,60];
a=[55,60,1065]';
ep=ones(1,length(p));
ea=ones(1,length(a))';
r=10; %number of iterations

Mp=M.*p; %sum(Bp) = p = (Bp'*ea)'=ea'*Bp , sum over i
Ma=Mp.*(a./(Mp*ep')); %sum(Ba')=a' = (Ba*ep')', sum over j

for k=1:r
    Mp = Ma.*(p./(Ma'*ea)');
    Ma = Mp.*(a./(Mp*ep'));
    %Ma = Ma.*(p./(Ma'*ea)').*(a./(Ma.*(p./(Ma'*ea)')*ep'));
end

d = a-Mp*ep';%if r is large enough, this difference should be 0
q = p/sum(p); % vector p as a proportion. 

Ad=Mp+q.*d;
A= Ad./(ea'*Ad);
\end{verbatim}
\end{algorithm}

\newpage

\subsection{Comparisons of difference rebalancing processes Matlab code} \label{alg:Comparisonsmatlab}

\begin{verbatim}
clear all
iterations = 10000;
%record the values of the portfolios for the different processes
PerEquit = zeros(4,iterations);
PerBank = zeros(4,iterations);
PerLinear = zeros(4,iterations);
numperiods=30;
startportfolios = [50, 540, 50, 80]; %can randomise
M = [ 0.2  0.05  0.05 0.01;
        0.2  0.05  0.05 0.02;
        0.15 0.25  0.25 0.35;
        0.15 0.30  0.30 0.40;
        0.3  0.35  0.35 0.22];
%startingassets=[100,100,150,170,200]'; %different example
startingassets = M*startportfolios';

for j =1:iterations
    returns =  exp((rand(5,numperiods-2)-0.5)/2);
    returns =[returns,sqrt(1./prod(returns,2)),sqrt(1./prod(returns,2))]-1; %get
    %back to original starting point over last two periods
    periods = size(returns,2);
    returns1 = returns+1;
    assetsovertime = zeros(5,periods+1);
    assetsovertime(1:5,1) = startingassets;
    for i =1:periods
        assetsovertime(1:5,i+1) = returns1(1:5,i).*assetsovertime(1:5,i);
    end
    [Ad1, A1] = marketinvariantrebalancing(startingassets,startportfolios,M);
    
    for i=1:periods
        newAd = Ad1.*returns1(1:5,i);
        currentportfoliovalue = sum(newAd);
        currentassetvalues = sum(newAd')';
        [Ad1 A] = marketinvariantrebalancing(currentassetvalues,currentportfoliovalue,M);
        A-M;
    end
    
    finalEquitiableAd = Ad1;
    finalportfoliovalues = sum(Ad1);
    PerformanceOfPortfoliosmarketinvariant = finalportfoliovalues./startportfolios -1;
    [Ad1, A1] = bankerrebalancing(startingassets,startportfolios,M,2);
    
    for i=1:periods
        newAd = Ad1.*returns1(1:5,i);
        currentportfoliovalue = sum(newAd);
        currentassetvalues = sum(newAd')';
        [Ad1 A] = bankerrebalancing(currentassetvalues,currentportfoliovalue,M,2);
        A-M;
    end
    
    finalBankerAd = Ad1;
    finalBankerportfoliovalues = sum(Ad1);
    PerformanceOfPortfoliosBanker = finalBankerportfoliovalues./startportfolios -1;
    
    [Ad1, A1] = linearrebalancing(startingassets,startportfolios,M);
    
    for i=1:periods
        newAd = Ad1.*returns1(1:5,i);
        currentportfoliovalue = sum(newAd);
        currentassetvalues = sum(newAd')';
        [Ad1 A] = linearrebalancing(currentassetvalues,currentportfoliovalue,M);
        A-M;
    end
    
    finalLinearAd = Ad1;
    finalLinearportfoliovalues = sum(Ad1);
    PerformanceOfPortfoliosLinear = finalLinearportfoliovalues./startportfolios -1;

    PerEquit(:,j)=PerformanceOfPortfoliosmarketinvariant';
    PerBank(:,j) = PerformanceOfPortfoliosBanker';
    PerLinear(:,j) = PerformanceOfPortfoliosLinear';
end
figure(2)
clf
tiledlayout(1,3)
nexttile
hist(PerEquit(2,:))
title('Market-Invariant Rebalancing Largest Portfolio Returns')
nexttile
hist(PerBank(2,:))
title('Banker Rebalancing Largest Portfolio Returns')
nexttile
hist(PerLinear(2,:))
title('Linear Rebalancing Largest Portfolio Returns')

function [Ad, A] = marketinvariantrebalancing(a,p,M) % a as column, p as row
ep=ones(1,length(p));
ea=ones(1,length(a))';

Mp=M.*p; %sum(Bp) = p = (Bp'*ea)'=ea'*Bp , sum over i
Ma=Mp.*(a./(Mp*ep')); %sum(Ba')=a' = (Ba*ep')', sum over j

for k=1:1000
    Mp = Ma.*(p./(Ma'*ea)');
    Ma = Mp.*(a./(Mp*ep'));
end

difference = a-Mp*ep';
%note sum of difference should be 0
proportionp = p/sum(p);
Ad=Mp+proportionp.*difference;
A= Ad./(ea'*Ad);
end

function [Ad, A] = bankerrebalancing(a,p,M,j) 
% a as column, p as row, %j is banker portfolio
A = M;
Ad = A.*p;
Ad(:,j) = Ad(:,j) + (a - sum(Ad')');
A = Ad./sum(Ad);
end

function [Ad, A] = linearrebalancing(a,p,M) % a as column, p as row,
ouweights = (a-M*p')/sum(a);
A = M+ouweights;
Ad = A.*p;
end

\end{verbatim}

\newpage
\section{Regression model output} \label{app: regressionresults}

The following is the output from the linear models of the difference in performance of the banker/largest portfolio and the shadow portfolio, randomising the initial portfolio values over 30 periods with 10,000 trials. The independent variable is the weighted variance. Here the returns are untethered. The first regression output is the from applying the market-invariant rebalancing process at each period, the second from the banker process and the third from the linear process. Details in \Cref{rem: advantaging}.
\begin{verbatim}
mdl1MI = 
Linear regression model:
    y ~ 1 + x1

Estimated Coefficients:
                    Estimate          SE         tStat      pValue 
                   ___________    __________    ________    _______
    (Intercept)     -5.865e-17    1.6412e-16    -0.35736    0.72083
    x1             -4.6593e-15    1.6539e-14    -0.28171    0.77817

Number of observations: 10000, Error degrees of freedom: 9998
Root Mean Squared Error: 1.04e-15
R-squared: 7.94e-06,  Adjusted R-Squared: -9.21e-05
F-statistic vs. constant model: 0.0794, p-value = 0.778
mdl1Bank = 
Linear regression model:
    y ~ 1 + x1

Estimated Coefficients:
                   Estimate       SE         tStat       pValue  
                   ________    _________    _______    __________
    (Intercept)    0.056559    0.0042733     13.235    1.1791e-39
    x1              -5.7347      0.43063    -13.317    4.0488e-40

Number of observations: 10000, Error degrees of freedom: 9998
Root Mean Squared Error: 0.0269
R-squared: 0.0174,  Adjusted R-Squared: 0.0173
F-statistic vs. constant model: 177, p-value = 4.05e-40
mdl1Lin = 
Linear regression model:
    y ~ 1 + x1

Estimated Coefficients:
                    Estimate          SE         tStat      pValue 
                   ___________    __________    ________    _______
    (Intercept)    -7.4269e-17    1.0647e-16    -0.69753    0.48549
    x1              8.0295e-15     1.073e-14     0.74834    0.45427

Number of observations: 10000, Error degrees of freedom: 9998
Root Mean Squared Error: 6.71e-16
R-squared: 5.6e-05,  Adjusted R-Squared: -4.4e-05
F-statistic vs. constant model: 0.56, p-value = 0.454
\end{verbatim}
Here is the output of the larger regression model for the banker portfolio, from \Cref{eqn:bestfit}.
\begin{verbatim}
    mdl1Bank = fitlm([mean(returnassets'.*startingassets)/sum(startingassets);mean(varassets'.*startingassets)/sum(startingassets)]',(PerBank(2,:)-PerBank(3,:))','y ~ x1 + x1^2 + x2')
mdl1Bank = 
Linear regression model:
    y ~ 1 + x1 + x2 + x1^2

Estimated Coefficients:
                   Estimate       SE         tStat       pValue  
                   ________    _________    _______    __________
    (Intercept)    0.031683    0.0022293     14.212    2.1274e-45
    x1              0.12739     0.010811     11.783    7.7317e-32
    x2              -65.776       4.4045    -14.934    6.8492e-50
    x1^2             2.1591      0.26062     8.2845    1.3378e-16

Number of observations: 10000, Error degrees of freedom: 9996
Root Mean Squared Error: 0.0271
R-squared: 0.0517,  Adjusted R-Squared: 0.0514
F-statistic vs. constant model: 182, p-value = 9.38e-115
\end{verbatim}
The following is the output from the linear models of the difference in performance of the banker/largest portfolio and the shadow portfolio, randomising the initial portfolio values over 30 periods with 10,000 trials. The independent variable is the weighted variance. Here the returns are tethered. The first regression output is the from applying the market-invariant rebalancing process at each period, the second from the banker process and the third from the linear process. Details in \Cref{rem: advantaging}.
\begin{verbatim}
mdl1MI = 
Linear regression model:
    y ~ 1 + x1

Estimated Coefficients:
                    Estimate          SE         tStat     pValue 
                   ___________    __________    _______    _______
    (Intercept)    -1.3157e-16    8.6833e-17    -1.5152    0.12976
    x1              1.8365e-15    7.2734e-15    0.25249    0.80066

Number of observations: 10000, Error degrees of freedom: 9998
Root Mean Squared Error: 1.07e-15
R-squared: 6.38e-06,  Adjusted R-Squared: -9.36e-05
F-statistic vs. constant model: 0.0638, p-value = 0.801
mdl1Bank = 
Linear regression model:
    y ~ 1 + x1

Estimated Coefficients:
                   Estimate        SE        tStat     pValue
                   ________    __________    ______    ______
    (Intercept)    0.038083    0.00093605    40.685      0   
    x1              -7.4775      0.078406    -95.37      0   

Number of observations: 10000, Error degrees of freedom: 9998
Root Mean Squared Error: 0.0116
R-squared: 0.476,  Adjusted R-Squared: 0.476
F-statistic vs. constant model: 9.1e+03, p-value = 0
mdl1Lin = 
Linear regression model:
    y ~ 1 + x1

Estimated Coefficients:
                    Estimate          SE         tStat     pValue 
                   ___________    __________    _______    _______
    (Intercept)      3.623e-17    5.3905e-17     0.6721    0.50153
    x1             -3.0591e-15    4.5152e-15    -0.6775    0.49811

Number of observations: 10000, Error degrees of freedom: 9998
Root Mean Squared Error: 6.66e-16
R-squared: 4.59e-05,  Adjusted R-Squared: -5.41e-05
F-statistic vs. constant model: 0.459, p-value = 0.498
\end{verbatim}

\end{document}